\begin{document}
\title{Authenticated teleportation and verification in a noisy network}
\author{Anupama Unnikrishnan}
\affiliation{Clarendon Laboratory, University of Oxford, Oxford OX1 3PU, UK}
\author{Damian Markham}
\affiliation{LIP6, CNRS, Sorbonne Universit\'e, 4 Place Jussieu, 75005 Paris}

\newtheorem{theorem}{Theorem}
\newtheorem{lemma}{Lemma}
\newtheorem{corollary}{Corollary}
\newtheorem{definition}{Definition}
\newenvironment{proof-sketch}{%
\renewcommand{\proofname}{Proof sketch}\proof}{\endproof}
\newcommand\Myperm[2][^n]{\prescript{#1\mkern-2.5mu}{}P_{#2}}
\newcommand\Mycomb[2][^n]{\prescript{#1\mkern-0.5mu}{}C_{#2}}

\begin{abstract}
Authenticated teleportation aims to certify the transmission of a quantum state through teleportation, even in the presence of an adversary. This scenario can be pictured in terms of an untrusted source distributing a Bell state between two parties who wish to verify it using some simple tests. We propose a protocol that achieves this goal in a practical way, and analyse its performance and security when the parties have noisy measurement devices. Further, we model a realistic experimental scenario where the state is subject to noise. We finally apply our analysis to the verification of graph states with noisy measurement devices.
\end{abstract}

%\date{\today}

\maketitle

\section{Introduction}

By the quantum teleportation protocol, a sender can transmit a quantum state to a receiver using only a shared Bell state and classical communication \cite{Bennett1993}. Over the years, teleportation has grown to be a fundamental protocol in the field of quantum information, and further it has been integrated into numerous other protocols as an important step \cite{Hillery1999, Christandl2005, Crepeau2002}. In this work, we consider \textit{authenticated teleportation}, where one can certify the teleportation procedure for sending quantum messages without trusting the shared entanglement \cite{Unnikrishnan2019a}.

The cryptographic task of authentication is concerned with two parties that share a private classical key, who wish to exchange a message with the guarantee that it has not been tampered with by an adversary that controls their communication channel.
Formulating this problem in the quantum setting, the parties Alice and Bob may share an insecure or imperfect quantum channel, which they would like to use to send messages in the form of quantum states. If they had in their possession a perfect shared Bell state, Alice could send a quantum state to Bob using the quantum teleportation procedure. Since the Bell state forms a quantum channel between only Alice and Bob, it follows that only Alice could have sent the message that Bob receives, and an adversary cannot tamper with it during the transmission. However, in authentication, the only prerequisite is that our parties share a private classical key. A quantum authentication scheme must then provide a way to distribute and test Bell states before using them as a quantum channel through which quantum messages can be sent.

Barnum et al. defined a framework for quantum authentication and additionally proposed the first scheme for authentication of quantum messages, based on error-correcting codes, in \cite{Barnum2002}. Further work on authentication of quantum messages has built upon this, focusing on aspects such as proving the composability of the above protocol \cite{Hayden2016}, allowing the key to be recycled if the message is accepted \cite{Hayden2016, Portmann2017}, and even in the case where there is some tampering by the adversary \cite{Portmann2017}. While the exponential security scaling of such protocols is highly desirable, it comes at the cost of requiring levels of entanglement that increase with the security parameter. In practice, this becomes infeasible.

Our approach, on the other hand, is a simple yet practical way of achieving authenticated quantum communication. We consider an untrusted, non-iid source who claims to be creating Bell states that the parties wish to use for teleportation. The parties then merely need to test many separate copies of the Bell state before using one for the teleportation procedure. In this way, the quantum channel is authenticated before transmitting a message. While such a straightforward approach adversely affects the level of security one can expect, our focus is on methods that are not experimentally demanding; in fact, our scheme can be easily implemented with current linear optical setups. 

We use and extend a simple verification protocol from \cite{Marin}, based on measuring stabilisers of the state, which can be applied to more complicated situations such as verifying graph states \cite{Markham2018}. Rather than requiring a large entangled state over multiple copies as in the quantum authentication scheme of \cite{Barnum2002}, our protocol simply requires many separate copies of the state.

As we are interested in how this may be applied to practical quantum networks, we consider the influence of noise on our authentication procedure. We demonstrate the tradeoff between how well our protocol works and its likelihood of failure, in such a noisy setting. 

Finally, our protocol and analysis, being formulated in terms of measuring stabilisers, is then easily extended to the verification of graph states in a realistic, noisy scenario. Graph states are a family of states which act as useful resources for quantum networks and computation, with applications including error correction \cite{Bella}, fault tolerant quantum computation \cite{Raussendorf2006, Nielsen2005}, cryptography \cite{Markham, Bell} and sensing \cite{Shettell2019, Friis2017}. As such, their verification is important and has been studied in various works \cite{Markham2018, Takeuchi2019, Hayashi2015, Takeuchi2018, Hayashi2019, Unnikrishnan2020}. This work analyses more deeply the effect of noise in protocols such as that presented in \cite{Markham2018}.

\section{Quantum authentication}
We start by outlining the authentication framework of Barnum et al. \cite{Barnum2002}, which we use to discuss and define security. A quantum authentication scheme is comprised of a shared classical key, $\kappa$, known only to Alice and Bob and chosen uniformly from the set of keys $\mathcal{K}$, and corresponding operators $A_\kappa, B_\kappa$. Alice sends a message $\ket{\psi}$ to Bob by encoding with her operator $A_\kappa$. The output of Bob, after decoding with his operator $B_\kappa$, is the resulting message, in addition to a classical register with basis states $\ket{ACC},\ket{REJ}$ which indicate acceptance or rejection, respectively. 
Based on the definitions in \cite{Barnum2002}, we introduce the following.
%We first outline the authentication framework of \cite{Barnum2002}, which we will use to discuss and define security. Say a sender Alice wishes to send a receiver Bob a quantum message $\ket{\psi}$, with the guarantee that it will not be tampered with by a dishonest party before reaching Bob. Thus, if Bob accepts the state, it should be exactly the state which Alice sent.
%Such a quantum authentication scheme comprises of a classical shared key $\kappa$ known only to Alice and Bob and chosen uniformly from the set of keys $\mathcal{K}$, and corresponding operators  $A_\kappa, B_\kappa$. Bob's output is then the resulting message in addition to a classical register with basis states $\ket{ACC}, \ket{REJ}$, which indicate acceptance or rejection, respectively. 
%The projectors onto passing and failing the protocol when Alice sends the message $\ket{\psi}$ are thus given by
%\begin{align}
% P_{pass}^{\ket{\psi}} & = \ket{\psi}\bra{\psi} \otimes \mathds{1} + (\mathds{1} - \ket{\psi} \bra{\psi}) \otimes \ket{REJ} \bra{REJ}, \\
% P_{fail}^{\ket{\psi}}  & = (\mathds{1} - \ket{\psi} \bra{\psi}) \otimes \ket{ACC} \bra{ACC} .
%\end{align}
%Ideally, the fidelity of Bob's output to the space defined by $P_{pass}^{\ket{\psi}}$ will be be high, for any possible adversarial intervention.

\begin{definition}
We define the following security properties of a protocol for quantum authentication:
\begin{itemize}

\item \textnormal{Completeness}: 
The protocol has \emph{completeness $c$} if, when there is no adversarial intervention, the state accepted by Bob will be the same as that sent by Alice up to $c$; that is, for all $\kappa \in \mathcal{K}$, 
%\begin{equation}
%B_\kappa \big[ A_\kappa (\ket{\psi} \bra{\psi}) \big] - \ket{\psi} \bra{\psi} \otimes \ket{ACC} \bra{ACC} 
%\geq 1 - \epsilon.
%\end{equation}

\begin{align}
\text{Tr} \Big[ \big( \ket{\psi} \bra{\psi} \otimes \ket{ACC} \bra{ACC} \big) \big( B_\kappa \big[ A_\kappa (\ket{\psi} \bra{\psi}) \big] \big) \Big] 
\geq c.
\end{align}

\item \textnormal{Soundness}: 
If the adversary's intervention is characterised by $\mathcal{O}$, the resulting output state on Bob's side after the protocol is 
\begin{align}
\rho_{out} =  \frac{1}{\abs{\mathcal{K}}} \sum_\kappa B_\kappa \Big[ \mathcal{O} \big[ A_\kappa (\ket{\psi} \bra{\psi}) \big] \Big],
\end{align}
 and the projector associated with failure is given by 
 \begin{align}
 P_{fail} = (\mathds{1} - \ket{\psi} \bra{\psi}) \otimes \ket{ACC} \bra{ACC},
 \end{align}
then, the protocol has \emph{soundness $\epsilon$} if 
\begin{align}
\text{Tr} (P_{fail} \rho_{out}) \leq \epsilon.
\end{align}
\end{itemize}
\label{def:qas}
\end{definition}

In the notation of \cite{Barnum2002}, a protocol is $\epsilon$-secure if it has completeness $1$ and soundness $\epsilon$.
The completeness condition is associated with an honest run of the protocol; in such an ideal case, the protocol should work perfectly.
The soundness condition represents the failure of the protocol in the presence of an adversary, where by failure we mean that the accepted state lies in the orthogonal subspace to the ideal. It then tells us that, despite adversarial intervention, the maximum probability of failure of the protocol is $\epsilon$.
The smaller the $\epsilon$, the more secure the protocol. 
For example, in the authentication procedure of \cite{Barnum2002}, $\epsilon$ decreases exponentially with an increase in the size of the encoding, scaling as $2^{-S}$.

\section{Protocol}
Before describing our protocol, let us state the communication scenario we consider. The two honest parties, Alice and Bob, can perform local operations and measurements. A source, who may be dishonest, is asked to produce Bell states $\ket{\Phi^+}= \frac{1}{\sqrt{2}} (\ket{00} + \ket{11})$ that our parties wish to use for quantum teleportation. The source is free to produce different states in each round. 
The measurement devices of Alice and Bob are trusted, but may be noisy. In addition, we assume an authenticated classical channel between the parties. This can be substituted by shared random secret keys, ensuring that our model lies in the authenticated communication setting as in \cite{Barnum2002}.

Our protocol for authenticated teleportation is based on the work of Marin and Markham in \cite{Marin}, where they tackle quantum secret sharing over untrusted channels. We will start by adapting their protocol to the scenario of two parties wishing to authenticate their quantum channel for teleportation, and then address noise in the network, finally extending our noise analysis to the certification of graph states.

We will formulate our protocol in terms of stabilisers, so that our analysis can be extended to other stabiliser states in a straightforward way. 
For the Bell state $\ket{\Phi^+}$, the stabiliser group is given by $\{\mathds{1} \otimes \mathds{1}, \sigma_X \otimes \sigma_X, - \sigma_Y \otimes \sigma_Y, \sigma_Z \otimes \sigma_Z\}$. Measuring any of these stabilisers on $\ket{\Phi^+}$ will always give a $+1$ outcome, and further, it is the only state for which this holds.

Based on this, we define an authenticated teleportation scheme given in Protocol \ref{alg:authteleptrust}. This protocol protects the parties against an untrusted source who creates a state that is not $\ket{\Phi^+}$ and attempts to trick the parties into using it for teleportation. The idea behind the protocol is simply that by asking the source for multiple copies of the state, and randomly choosing whether to test or use it, the source is forced to behave honestly in order to avoid being caught. 

This corresponds to a quantum authentication scheme as in Definition \ref{def:qas}, where the shared random classical key $\kappa$ comprises of random strings that specify $r$ and the choice of stabiliser to measure in each round $i \neq r$. Then, depending on this key $\kappa$, the parties apply their corresponding operations. 
We will demonstrate that our protocol fulfils the conditions of completeness and soundness. 

As mentioned previously, we are interested in analysing the performance of the protocol in a realistic, noisy network. 
We will incorporate noise into the authentication part of our scheme, but once the authenticated quantum channel has been established, we will assume that the parties can perform the teleportation perfectly.
We first consider how noise in Alice and Bob's measurement devices affects the security of the protocol, and further investigate the operation of the protocol when the source provides noisy states. Along the way, we will adapt the protocol to suit our purposes. 

\begin{algorithm}[t]
\caption{\textsc{Authenticated teleportation}}
\begin{flushleft}
\textit{Input}: Security parameter $S$. \\
\textit{Goal}: Alice teleports state $\ket{\psi}$ to Bob through an authenticated channel.
\end{flushleft}
\begin{algorithmic}[1]
\STATE An untrusted source generates $S$ copies of the Bell state, and sends the shares of each to Alice and Bob.  \\ \
\STATE Alice chooses a random $r \in \{1, 2, ..., S\}$ and sends $r$ to Bob.
 \\ \
\STATE For all copies $i \neq r$, Alice randomly chooses to measure either $\sigma_X, \sigma_Y, \sigma_Z$ on her part of the state. She tells Bob which operator she measured, and her measurement outcome. \\ \
\STATE For all copies $i \neq r$, Bob measures the same operator as Alice. For each copy, if the product of their measurement outcomes is $+1$ (or $-1$ when measuring $\sigma_Y$), they pass the test, otherwise they fail. \\ \
\STATE If \textit{all} tests on copies $i \neq r$ were passed, the parties ACCEPT. Otherwise, they REJECT. \\ \
\STATE The parties use copy $r$ as the entangled state to teleport $\ket{\psi}$ to Bob. 
\end{algorithmic}
\label{alg:authteleptrust}
\end{algorithm}

\section{Security analysis}
We first derive security bounds for Protocol \ref{alg:authteleptrust}, where the source can supply any state in the hope of cheating. We start by assuming Alice and Bob can do perfect measurements, and then extend this analysis to the case where their measurement devices are noisy. 

\subsection{Perfect measurements}
For the case of perfect measurements, we first go through the security proof from \cite{Marin}, since we will be expanding on this in later sections.
Let $\Pi$ be the projector onto $\ket{\Phi^+}$, given by $\Pi = \ket{\Phi^+}\bra{\Phi^+}$. This can be written in terms of the stabilisers as 
\begin{equation}
\Pi = \frac{1}{4} [\mathds{1} \otimes \mathds{1} + \sigma_X \otimes \sigma_X + \sigma_Z \otimes \sigma_Z + \sigma_Z \sigma_X \otimes \sigma_Z \sigma_X],
\end{equation}
where from now onwards we will write $-\sigma_Y \otimes \sigma_Y$ as $\sigma_Z \sigma_X \otimes \sigma_Z \sigma_X$. 
 The projector onto the +1 eigenstate of the stabiliser $\sigma_X \otimes \sigma_X$ (passing the test) is given by
$
\frac{\mathds{1} \otimes \mathds{1} + \sigma_X \otimes \sigma_X}{2}
$, and similarly for the others.
Thus, the POVM element for passing a test is
\begin{align}
M_{pass} = \ & \frac{1}{3} \Big[ \frac{\mathds{1} \otimes \mathds{1} + \sigma_X \otimes \sigma_X}{2}  
 +   \frac{\mathds{1} \otimes \mathds{1}  + \sigma_Z \otimes \sigma_Z }{2}  \nonumber \\
& +  \frac{\mathds{1} \otimes \mathds{1} + \sigma_Z \sigma_X \otimes \sigma_Z \sigma_X}{2} \Big]  \nonumber \\
= \ & \frac{ \mathds{1} \otimes \mathds{1}  + 2\Pi }{3}.
\end{align}
Let us define $M_{ACC}$ as the POVM element that corresponds to accepting. In Protocol \ref{alg:authteleptrust}, we see that all tests on copies $i \neq r$ must pass in order to do so. This gives
\begin{align}
M_{ACC} = \underset{i \neq r}{\otimes} M_{pass_{i}}.
\end{align}

\begin{theorem}[{\cite{Marin}}]
In the case of perfect measurements, Protocol \ref{alg:authteleptrust} has completeness 1 and soundness $\frac{1}{S}$.
\end{theorem}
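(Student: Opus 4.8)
The plan is to establish the two properties separately. Completeness is the easy direction: when the source is honest, every copy is exactly $\ket{\Phi^+}$, so for each test round $i\neq r$ Alice and Bob measure the same stabiliser and obtain correlated outcomes with certainty (the stabiliser has $+1$ eigenvalue, or $-1$ for $\sigma_Y\otimes\sigma_Y$ which is accounted for in the pass condition). Hence every test passes with probability $1$, the parties accept, and copy $r$ is a pristine Bell state, so teleportation of $\ket{\psi}$ succeeds perfectly. Formally one checks that $\mathrm{Tr}[M_{pass}\,\Pi] = \mathrm{Tr}\big[\tfrac{\mathds{1}\otimes\mathds{1} + 2\Pi}{3}\,\Pi\big] = 1$, so $\mathrm{Tr}[M_{ACC}\,\Pi^{\otimes(S-1)}] = 1$ and the completeness inequality holds with $c=1$.

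For soundness, the key idea is to bound the probability that the adversary (here, the dishonest source, whose intervention $\mathcal{O}$ prepares an arbitrary global state $\rho$ on the $S$ copies) both gets accepted and makes the teleported output land in the orthogonal complement of $\ket{\psi}$. The crucial observation is that conditioned on acceptance, the only way teleportation fails is if copy $r$ was not the projected-to Bell state; since $r$ is chosen uniformly at random and independently by Alice, and the source cannot know $r$ in advance, the failure probability is controlled by how "far" the accepted state is from having all copies equal to $\ket{\Phi^+}$. Concretely, I would average over $r$ and use $M_{pass} = \tfrac{\mathds{1}\otimes\mathds{1}+2\Pi}{3}$ to write the accept-and-fail probability as a function of $\rho$, then argue it is maximised by a state that is correct on all but one copy — giving $\mathrm{Tr}(P_{fail}\rho_{out}) \le \tfrac{1}{S}$, since the adversary's single "bad" copy is selected as $r$ with probability only $1/S$.

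The main obstacle is making the last step rigorous for a fully general, possibly entangled and non-iid state $\rho$ across the $S$ copies: one cannot simply assume the source's state is a tensor product, so the optimisation over $\mathcal{O}$ must be handled carefully. The standard route (and the one I expect the authors to follow, since the theorem is attributed to \cite{Marin}) is to expand $\rho$ in the Pauli/stabiliser basis, note that the test POVMs act diagonally on the relevant stabiliser eigenspaces, and use the symmetry of the random choice of $r$ to reduce the general case to a convexity argument: the worst case is a state supported on the subspace where exactly one designated copy is orthogonal to $\ket{\Phi^+}$ and the rest are perfect, and that designated copy is caught unless it happens to be the (uniformly random) unused copy $r$. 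Collecting the bounds yields soundness $\epsilon = 1/S$, establishing $\tfrac{1}{S}$-security in the sense following Definition \ref{def:qas}.
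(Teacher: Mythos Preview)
Your plan matches the paper's approach: reduce soundness to bounding $\mathrm{Tr}(Q\rho_{1\dots S})$ for the operator $Q=\tfrac{1}{S}\sum_r(\mathds{1}\otimes\mathds{1}-\Pi)_r\otimes M_{ACC}$, observe that $Q$ is diagonal in the $\{\Pi,\Pi^\perp\}^{\otimes S}$ basis, and conclude that the worst state has exactly one bad copy. However, two concrete steps that carry the actual content are missing from your sketch.

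First, the passage from ``teleported state lies in $\ket{\psi}^\perp$'' to ``copy $r$ lies in $\Pi^\perp$'' is not automatic. The paper invokes the Horodecki bound (teleportation fidelity is at least the singlet fraction of the resource state) to justify $\mathrm{Tr}[(\mathds{1}-\ket{\psi}\bra{\psi})\rho_B^{acc}]\le \mathrm{Tr}[(\mathds{1}\otimes\mathds{1}-\Pi)_r\,\rho_{AB}^{acc}]$. Without this, your reduction to an operator inequality on the resource copies is unjustified.

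Second, and more importantly, you assert that the optimum is one bad copy but do not prove it. The paper's proof computes the eigenvalue of $Q$ on an eigenprojector with $k$ factors of $\Pi^\perp$ explicitly as
\[
g(k,S)=\frac{k}{S}\cdot\frac{1}{3^{\,k-1}},
\]
using $M_{pass}\Pi=\Pi$ and $M_{pass}\Pi^\perp=\tfrac13\Pi^\perp$, and then checks that $k\mapsto k/3^{k-1}$ is maximised at $k=1$. Your ``convexity argument'' is really just the statement that $\max_\rho\mathrm{Tr}(Q\rho)$ equals the largest eigenvalue of $Q$; it does not by itself tell you which $k$ wins, and the intuitive claim ``more bad copies can only hurt the adversary'' is exactly what the factor $3^{-(k-1)}$ quantifies. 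Without this computation the soundness bound $1/S$ is asserted, not derived.
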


\begin{proof}
%The fidelity of teleportation with a perfect Bell state is 1. 
First, in the case that the source supplies all ideal Bell states, we show that the protocol works perfectly.
%The projector onto the ideal accepted copy $r$ is given by
%\begin{equation}
%P_{pass} = \underset{i \neq r}{\otimes} M_{ACC_{i}} \otimes \Pi_r .
%\end{equation}
We write $S$ copies of the ideal state as 
%$\rho_{id} =
$ \underset{S}{\otimes} \Pi$. 
%Normalising over the probability of randomly choosing one copy as $r$,
We can then calculate the probability of successful authenticated teleportation using copy $r$, when the source supplies all ideal states, as 
\begin{align}
%\text{Tr} (P_{pass} \rho_{id}) & = 
\text{Tr} \Big[  \Pi_r \otimes M_{ACC}  \underset{S}{\otimes} \Pi \Big] 
%& = \text{Tr} \Big[ \frac{1}{S} \sum_{r=1}^S \Pi_r \underset{i \neq r}{\otimes} M_{pass_{i}}  \underset{S}{\otimes} \Pi \Big] \nonumber \\
& = \text{Tr} \Big[ \Pi_r \underset{i \neq r}{\otimes} \Big( \frac{\mathds{1} \otimes \mathds{1} +  2\Pi}{3} \Big)_i \Pi_i  \Big] \nonumber \\
 & = \text{Tr} \Big[ \Pi_r \underset{i \neq r}{\otimes}  \Pi_i \Big] \nonumber \\
%& =  \text{Tr} [\underset{S}{\otimes} \Pi]   \nonumber \\
& = 1.
\end{align}
As we see, in the ideal case we will always pass the verification test, and using the perfect Bell state for teleportation results in a teleportation fidelity of 1. Thus, if the source supplies ideal states, the parties can perform authenticated teleportation of a quantum message perfectly.

We now consider the soundness bound. 
%Let $f = \expval{\rho_{B}}{\psi}$ be the fidelity of the teleported qubit $\rho_B$ on Bob's side with the qubit Alice wants to teleport, and let $F=\text{Tr} (\Pi \rho^r)$ be the fidelity of the entangled state $\rho^r$ with the Bell state, which is used in the teleportation. 
%Expressing this in terms of fidelities with respect to the wrong qubit and wrong state, 
Let $\rho_B$ be the output of the protocol, which is the teleported qubit on Bob's side along with the classical register that indicates whether to accept or reject the output. 
It is known that the fidelity of the teleportation is at least as high as the fidelity of the entangled state with the ideal Bell state \cite{Horodecki}.

Let $P_{fail}$ be the projector onto the orthogonal subspace of the qubit $\ket{\psi}$ that Alice wants to teleport, given that the teleportation is accepted. If Bob's state $\rho_B$ at the end of the protocol belongs to this subspace, the protocol has failed. 
We have
\begin{align}
\text{Tr}(P_{fail} \rho_B) & =  \text{Tr} \Big[(\mathds{1} - \ket{\psi}\bra{\psi} ) \otimes \ket{ACC} \bra{ACC} \rho_B \Big] \nonumber \\
& \leq \text{Tr} \Big[  (\mathds{1} \otimes \mathds{1} - \Pi)_r \otimes \ket{ACC}\bra{ACC} \rho_{AB}^r \Big], 
\end{align}
where the total output state after the verification steps, $\rho_{AB}^r$, is given by
\begin{align}
\rho_{AB}^r = \ & \frac{1}{S} \sum_{r=1}^S \Big[ p_{acc} \rho_{acc}^r \otimes \ket{ACC}\bra{ACC} \nonumber \\
& +  p_{rej} \rho_{rej}^r \otimes \ket{REJ}\bra{REJ} \Big],
\end{align}
with $p_{acc}, p_{rej}$ denoting the probability of accepting or rejecting, and $\rho_{acc}^r, \rho_{rej}^r$ denoting the output states conditioned on accepting or rejecting, respectively. 
When we condition on accepting, this becomes
\begin{align}
\text{Tr} (P_{fail} \rho_B)  
%& \leq \text{Tr} \Bigg[  \frac{1}{S} \sum_{r=1}^S (\mathds{1} \otimes \mathds{1} - \Pi)_r p_{acc} \rho_{acc} \otimes \ket{ACC}\bra{ACC} \Bigg] \nonumber \\
& \leq \text{Tr} \Bigg[  \frac{1}{S} \sum_{r=1}^S (\mathds{1} \otimes \mathds{1} - \Pi)_r p_{acc} \rho_{acc}^r \Bigg].
\end{align}
We will now denote the total state shared over all $S$ copies between Alice and Bob as $\rho_{1...S}$.
Then, the entangled state used to teleport, which is the post-measurement state conditioned on accepting, is given by
\begin{align}
\rho_{acc}^r 
& = \frac{1}{\text{Tr} \Big[ M_{ACC} \rho_{1...S} \Big]} \ \underset{i \neq r} {\text{Tr}}\Big[ M_{ACC} \rho_{1...S}\Big] \nonumber \\
& = \frac{1}{p_{acc}} \ \underset{i \neq r} {\text{Tr}}\Big[ M_{ACC} \rho_{1...S}\Big]. 
\end{align}
This gives 
\begin{align}
\text{Tr} (P_{fail} \rho_B) 
%& \leq \text{Tr} \Bigg[ \frac{1}{S} \sum_{r=1}^S (\mathds{1} \otimes \mathds{1} - \Pi)_r \underset{i \neq r}{\otimes} M_{ACC_{i}} \rho_{1...S} \Bigg]  \nonumber \\
& \leq \text{Tr}  \Bigg[\frac{1}{S} \sum_{r=1}^S (\mathds{1} \otimes \mathds{1} - \Pi)_r  \otimes M_{ACC} \rho_{1...S} \Bigg].
%& = \text{Tr}  \Bigg[\frac{1}{S} \sum_{r=1}^S (\mathds{1} \otimes \mathds{1} - \Pi)_r  \underset{i \neq r}{\otimes} \Big(\frac{\mathds{1} \otimes \mathds{1}+ 2\Pi}{3}\Big)_i \rho_{1...S} \Bigg] .
\end{align}
%Let us write $\Pi^\perp = \mathds{1}\otimes \mathds{1} - \Pi$ as the projector onto the subspace orthogonal to $\Pi$. 
Denoting 
\begin{align}
Q 
& = \frac{1}{S} \overset{S}{\underset{r=1}{\sum}}(\mathds{1}\otimes \mathds{1} - \Pi)_r \otimes M_{ACC},
\label{eq:qandy}
\end{align}
we have 
\begin{align}
\text{Tr}(P_{fail} \rho_B) \leq \text{Tr} (Q \rho_{1...S}).
\end{align}
%Since $Q$ is a Hermitian, positive operator, we can determine this bound by computing its maximum eigenvalue. 
We can determine an upper bound on this expression, no matter what state $\rho_{1...S}$ the source supplies, by computing the maximum eigenvalue of $Q$. 
In Protocol \ref{alg:authteleptrust}, we accept if all tests on copies $i \neq r$ pass, and so in this case $Q$ is given by
\begin{align}
Q
& = \frac{1}{S} \overset{S}{\underset{r=1}{\sum}}(\mathds{1}\otimes \mathds{1} - \Pi)_r \underset{i \neq r}{\otimes} \Big( \frac{ \mathds{1} \otimes \mathds{1}  + 2\Pi}{3}\Big)_i.
\end{align}
We know $\Pi$ is an eigenprojector of $\frac{\mathds{1} \otimes \mathds{1} + 2\Pi}{3}$. Let $\Pi^\perp$ be the projector $(\mathds{1} \otimes \mathds{1} - \Pi)$. Then, the complete set of eigenprojectors for $Q$ is given by  
\begin{align}
\{ \underset{l \neq m}{\underset{k,}{\otimes}} \Pi^\perp_l \underset{m \neq l}{\underset{S-k,}{\otimes}} \Pi_m \},
\label{eq:qeig}
\end{align}
where $k \in \{0, 1, ...,  S\}$ is the number of $\Pi^\perp$'s in the eigenprojector.
%\begin{align}
%(\mathds{1} \otimes \mathds{1} - \Pi) \Pi  = 0, & \text{ \ \ } (\mathds{1} \otimes \mathds{1} - \Pi) \Pi^\perp  = \Pi^\perp , \nonumber \\
%\Big( \frac{ \mathds{1} \otimes \mathds{1} + 2\Pi}{3} \Big) \Pi  = \Pi, & \text{ \ \ } \Big( \frac{ \mathds{1} \otimes \mathds{1} + 2\Pi}{3} \Big) \Pi^\perp  = \frac{1}{3} \Pi^\perp .
%\end{align}
We must then determine an expression for the eigenvalues of $Q$ as a function of $k, S$, which we will denote as $g(k, S)$, from the eigenvalue equation 
\begin{align}
Q \Big[ \underset{l \neq m}{\underset{k,}{\otimes}} \Pi^\perp_l \underset{m \neq l}{\underset{S-k,}{\otimes}} \Pi_m \Big] = g(k, S) \Big[ \underset{l \neq m}{\underset{k,}{\otimes}} \Pi^\perp_l \underset{m \neq l}{\underset{S-k,}{\otimes}} \Pi_m \Big].
\end{align}
We will make use of the following: 
\begin{align}
(\mathds{1} \otimes \mathds{1} - \Pi) \Pi  = 0,  & \text{ \  } (\mathds{1} \otimes \mathds{1} - \Pi) \Pi^\perp  = \Pi^\perp, \nonumber \\
\Big( \frac{ \mathds{1} \otimes \mathds{1} + 2\Pi}{3} \Big) \Pi  = \Pi, &  \text{ \ } \Big( \frac{ \mathds{1} \otimes \mathds{1} + 2\Pi}{3} \Big) \Pi^\perp  = \frac{1}{3} \Pi^\perp .
\end{align}
%By explicitly calculating the eigenvalues for all possible eigenprojectors for different values of $S$, we determine the following general expression:
% for the eigenvalues of $Q$: 
By examining the action of $Q$ on an eigenprojector with $(S-k)$ $\Pi$'s and $k$ $\Pi^\perp$'s, we see that the corresponding eigenvalue is given by 
\begin{equation}
g(k,S) = \frac{k}{S} (1)^{S-k} \Big( \frac{1}{3} \Big)^{k-1} 
= \frac{k}{S}  \frac{1}{3^{k-1}}.
\end{equation}
Here, $\frac{k}{S}$ is the probability of a randomly chosen eigenprojector $r$ belonging to the set of $k$ $\Pi^\perp$ terms (note that if $r$ belongs to the set of $\Pi$ terms, then this does not contribute to the eigenvalue), while the remaining terms come from the action of $M_{ACC}$ on $\Pi, \Pi^\perp$.
To determine the soundness bound, we then find the maximum value of $g(k, S)$ over all $k$, which occurs for $k=1$, giving
\begin{equation}
\text{Tr} (P_{fail} \rho_B)  \leq \max_k \  \frac{k}{3^{k-1} S} = \frac{1}{S}.
\label{eq:soundfirst}
\end{equation}

\end{proof}

This shows that the optimal cheating strategy for a dishonest source is to provide all copies but one as ideal Bell states. The probability of the non-ideal copy being used for teleportation is then $\frac{1}{S}$, representing the maximum probability of failure of our protocol. 

%Let us now consider the teleportation step, 
Now, given that the parties have accepted the transmission of the qubit as valid,
we can derive an expression for the fidelity of the teleported qubit, $f = \text{Tr} (\ket{\psi} \bra{\psi}  \rho_{acc}^B)$, in terms of this soundness bound and the probability of acceptance:
\begin{align}
\text{Tr} (P_{fail} \rho_B ) 
%& = \text{Tr} [ P^\perp \otimes \ket{ACC} \bra{ACC} \rho_B ] \\
& = \text{Tr} \Big[ (\mathds{1} - \ket{\psi}\bra{\psi} ) \otimes \ket{ACC} \bra{ACC} \rho_B \Big] \nonumber \\
& = \text{Tr} \Big[ (\mathds{1} - \ket{\psi} \bra{\psi}) p_{acc} \rho_{acc}^B \Big] \nonumber \\
%& = \text{Tr} ( p_{acc} \rho^{acc}_B ) - \text{Tr} ( p_{acc} \ket{\psi} \bra{\psi} \rho^{acc}_B ) \nonumber \\
%& = p_{acc} - p_{acc} \text{Tr} (\ket{\psi} \bra{\psi} \rho^{acc}_B ) \nonumber \\
& = p_{acc} ( 1 - f ).
\label{eq:fidexpr}
\end{align}
Thus, using our calculated expression for soundness in Equation (\ref{eq:soundfirst}), we obtain the following expression for the fidelity of the teleported qubit in Protocol \ref{alg:authteleptrust}: 
\begin{align}
f  = 1 - \frac{\text{Tr} (P_{fail} \rho_B )}{p_{acc}} \geq 1 - \frac{1}{S p_{acc}}.
\end{align}
%We cannot further simplify this here, because the probability of acceptance is given by the probability of passing every test, and we only know this once we have done the tests. 

\subsection{Noisy measurements}

We now extend this to the case where we have imperfect, or noisy, measurements, and model the scenario by introducing a noise parameter $p \in [0, 1]$. 
Let us denote the POVM element for passing the test when we do the noisy $\sigma_X \otimes \sigma_X$ measurement as
\begin{equation}
p \frac{\mathds{1} \otimes \mathds{1} + \sigma_X \otimes \sigma_X}{2} + (1-p) \frac{\mathds{1} \otimes \mathds{1}}{2}, 
\end{equation}
and similarly for the other Pauli measurements $\sigma_Z \otimes \sigma_Z, \sigma_Z \sigma_X \otimes \sigma_Z \sigma_X$. Thus, with probability $p$ each measurement proceeds perfectly, and with probability $(1-p)$ the measurement randomly gives either a $\pm 1$ outcome. The POVM element for passing a test is then
\begin{align}
M_{pass} & = \frac{ (3-p) \mathds{1} \otimes \mathds{1} + 4 p \Pi}{6},
\end{align}
and the overall POVM element for accepting in Protocol \ref{alg:authteleptrust} is given by
\begin{align}
M_{ACC} = \underset{i \neq r}{\otimes} \Big( \frac{ (3-p) \mathds{1} \otimes \mathds{1} + 4 p \Pi}{6} \Big)_i .
\end{align}

\begin{theorem}
In the case of noisy measurements with noise parameter $p$, Protocol \ref{alg:authteleptrust} has completeness $( \frac{1+p}{2} )^{S-1}$ and soundness $ \underset{k}{\max \ }  \frac{k}{S} (\frac{1+p}{2})^{S-k} ( \frac{3-p}{6})^{k-1}$, where $k \in \{0, 1,...,  S\}, p \in [0,1]$.
\end{theorem}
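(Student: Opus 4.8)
The plan is to mirror the proof of Theorem~1, replacing the noiseless pass element $\frac{\mathds{1}\otimes\mathds{1}+2\Pi}{3}$ by its noisy counterpart $M_{pass}=\frac{(3-p)\,\mathds{1}\otimes\mathds{1}+4p\,\Pi}{6}$ everywhere. The single new ingredient one needs is that $\Pi$ and $\Pi^\perp=\mathds{1}\otimes\mathds{1}-\Pi$ remain eigenprojectors of this element, with $M_{pass}\Pi=\frac{(3-p)+4p}{6}\Pi=\frac{1+p}{2}\Pi$ and $M_{pass}\Pi^\perp=\frac{3-p}{6}\Pi^\perp$; note also that $\frac{1+p}{2}\le 1$ for $p\in[0,1]$, so $M_{pass}$ is still a legitimate POVM element.

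For completeness, I would take the source to supply $\bigotimes_S\Pi$. For a fixed key $\kappa$ (which fixes $r$ and the Pauli measured in each test round), each of the $S-1$ tests passes with probability $\mathrm{Tr}[M_{pass}\Pi]=\frac{1+p}{2}$, independently of the chosen Pauli, and conditioned on acceptance the state on copy $r$ is the perfect Bell state, so the teleported qubit has fidelity $1$ with $\ket\psi$. Hence the left-hand side of the completeness inequality equals $\bigl(\frac{1+p}{2}\bigr)^{S-1}$ for every $\kappa$, giving completeness $\bigl(\frac{1+p}{2}\bigr)^{S-1}$.

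For soundness, I would reuse the reduction from the proof of Theorem~1 unchanged: since it relies only on the fact that the teleportation fidelity is at least the fidelity of the shared state with $\ket{\Phi^+}$ (and on $0\le M_{ACC}\le\mathds{1}$), it is unaffected by measurement noise, so $\mathrm{Tr}(P_{fail}\rho_B)\le\mathrm{Tr}(Q\,\rho_{1\dots S})$ with $Q=\frac1S\sum_{r=1}^S(\mathds{1}\otimes\mathds{1}-\Pi)_r\otimes\bigotimes_{i\ne r}M_{pass,i}$. Because $\Pi,\Pi^\perp$ are common eigenprojectors of $M_{pass}$, the tensor products $\bigl\{\bigotimes_k\Pi^\perp\otimes\bigotimes_{S-k}\Pi\bigr\}$ are again a complete set of eigenprojectors of $Q$. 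Acting with $Q$ on such a projector with $k$ factors of $\Pi^\perp$, the identity $(\mathds{1}\otimes\mathds{1}-\Pi)\Pi=0$ kills every term in the $r$-sum whose index sits on a $\Pi$ slot, leaving exactly the $k$ terms whose index sits on a $\Pi^\perp$ slot; each such term contributes $\frac1S\cdot 1\cdot\bigl(\frac{3-p}{6}\bigr)^{k-1}\bigl(\frac{1+p}{2}\bigr)^{S-k}$, so the eigenvalue is $g(k,S)=\frac kS\bigl(\frac{1+p}{2}\bigr)^{S-k}\bigl(\frac{3-p}{6}\bigr)^{k-1}$. The soundness is the largest eigenvalue, $\max_k g(k,S)$ over $k\in\{0,1,\dots,S\}$, which is the claimed bound.

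The calculation is essentially routine once Theorem~1 is in hand; the points that require care are (i) checking $M_{pass}\le\mathds{1}$ so that the soundness reduction still applies, and (ii) the combinatorial bookkeeping of the factor $k$ — only when the random teleportation index $r$ falls on a $\Pi^\perp$ slot does a term of $Q$ survive, and the surviving factors distribute as $k-1$ copies of $\frac{3-p}{6}$ and $S-k$ copies of $\frac{1+p}{2}$. Unlike the noiseless case, the maximiser of $g(k,S)$ genuinely depends on $p$ (for instance the ratio $g(k+1,S)/g(k,S)=\frac{k+1}{k}\cdot\frac{3-p}{3+3p}$ exceeds $1$ for all $k$ when $p=0$, so the maximum is at $k=S$, whereas at $p=1$ it is at $k=1$), which is why the statement is left in the form $\max_k$ rather than evaluated.
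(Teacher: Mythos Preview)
Your proposal is correct and follows essentially the same approach as the paper: compute the eigenvalues of $M_{pass}$ on $\Pi$ and $\Pi^\perp$, read off completeness from $\mathrm{Tr}\bigl[\Pi_r\otimes M_{ACC}\bigotimes_S\Pi\bigr]$, and obtain soundness by determining the eigenvalues $g(k,S,p)$ of $Q$ on the product eigenprojectors exactly as in Theorem~1. Your additional remarks (the check $M_{pass}\le\mathds{1}$ and the ratio analysis explaining why the maximiser in $k$ depends on $p$) are correct and go slightly beyond what the paper records, but the argument itself is the same.
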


\begin{proof}
We will use:
\begin{align}
(\mathds{1} \otimes \mathds{1} - \Pi) \Pi  = 0, \text{ \ } ( &\mathds{1} \otimes \mathds{1}  - \Pi) \Pi^\perp   = \Pi^\perp,  \nonumber \\
\Big( \frac{ (3-p) \mathds{1} \otimes \mathds{1} + 4 p \Pi}{6} \Big)\Pi & = \Big( \frac{1+p}{2} \Big) \Pi, \nonumber \\ 
\Big( \frac{ (3-p) \mathds{1} \otimes \mathds{1} + 4 p \Pi}{6} \Big) \Pi^\perp & = \Big( \frac{3-p}{6} \Big) \Pi^\perp . 
\label{eq:noisyp}
\end{align}
The probability of successful authenticated teleportation when the source supplies ideal states is given by
\begin{align}
%\text{Tr} (P_{pass} \rho_{id}) & = 
\text{Tr} \Big[ \Pi_r & \otimes M_{ACC} \underset{S}{\otimes} \Pi \Big] \nonumber \\
& = \text{Tr} \Big[ \Pi_r \underset{i \neq r}{\otimes} \Big( \frac{ (3-p) \mathds{1} \otimes \mathds{1} + 4 p \Pi}{6} \Big)_i \Pi_i  \Big] \nonumber \\
& = \text{Tr} \Big[ \Pi_r  \underset{i \neq r}{\otimes} \Big( \frac{1+p}{2} \Big) \Pi_i  \Big] \nonumber \\
%& = \Big( \frac{1+p}{2} \Big)^{S-1} \text{Tr} [\underset{S}{\otimes} \Pi] \nonumber \\
&  =  \Big( \frac{1+p}{2} \Big)^{S-1} .
\end{align}
To calculate soundness, we replace $Q$ in the previous proof by substituting in Equation (\ref{eq:qandy}) our new expression for $M_{ACC}$, giving
\begin{equation}
Q = \frac{1}{S} \sum_{r=1}^S (\mathds{1} \otimes \mathds{1} - \Pi)_r \underset{i \neq r}{\otimes} \Big( \frac{ (3-p) \mathds{1} \otimes \mathds{1} + 4 p \Pi}{6} \Big)_i.
\end{equation}
Considering the action of $Q$ on a general eigenprojector with $(S-k)$ $\Pi$'s and $k$ $\Pi^\perp$'s, 
we find the expression for the corresponding eigenvalue to be
\begin{equation}
 g(k,S,p) = \frac{k}{S} \Big(\frac{1+p}{2}\Big)^{S-k} \Big( \frac{3-p}{6}\Big)^{k-1} .
\end{equation}
Again, we must determine the maximum eigenvalue of $Q$ in order to bound the soundness of the protocol. This gives
\begin{align}
\text{Tr} (P_{fail} \rho_B) \leq \max_k \ \frac{k}{S} \Big(\frac{1+p}{2}\Big)^{S-k} \Big( \frac{3-p}{6}\Big)^{k-1}.
\end{align}

\end{proof}

We can compute this bound numerically. In Figure \ref{fig:noisysoundness}, we plot the maximum eigenvalue of $Q$ for different values of $p$, along with the completeness bound. This seems to show that the protocol can never fail in high noise. However, this is because our condition on the protocol accepting the state is that every copy must pass the test (Step 5). In high noise cases, we saw in our discussion of completeness that the probability of each state passing the test is low, and so it is very unlikely that noisy measurements will allow every test to pass. Thus, the protocol hardly ever accepts the state in high noise, and so it hardly ever fails. 

From our analysis, it is clear that the protocol does not work well in the setting of noisy measurements. Therefore, we consider a modification of the protocol in the next section.

\begin{figure}[t]
\centering
\includegraphics[trim = 1cm 0cm 0cm 1cm, width=0.5\textwidth]{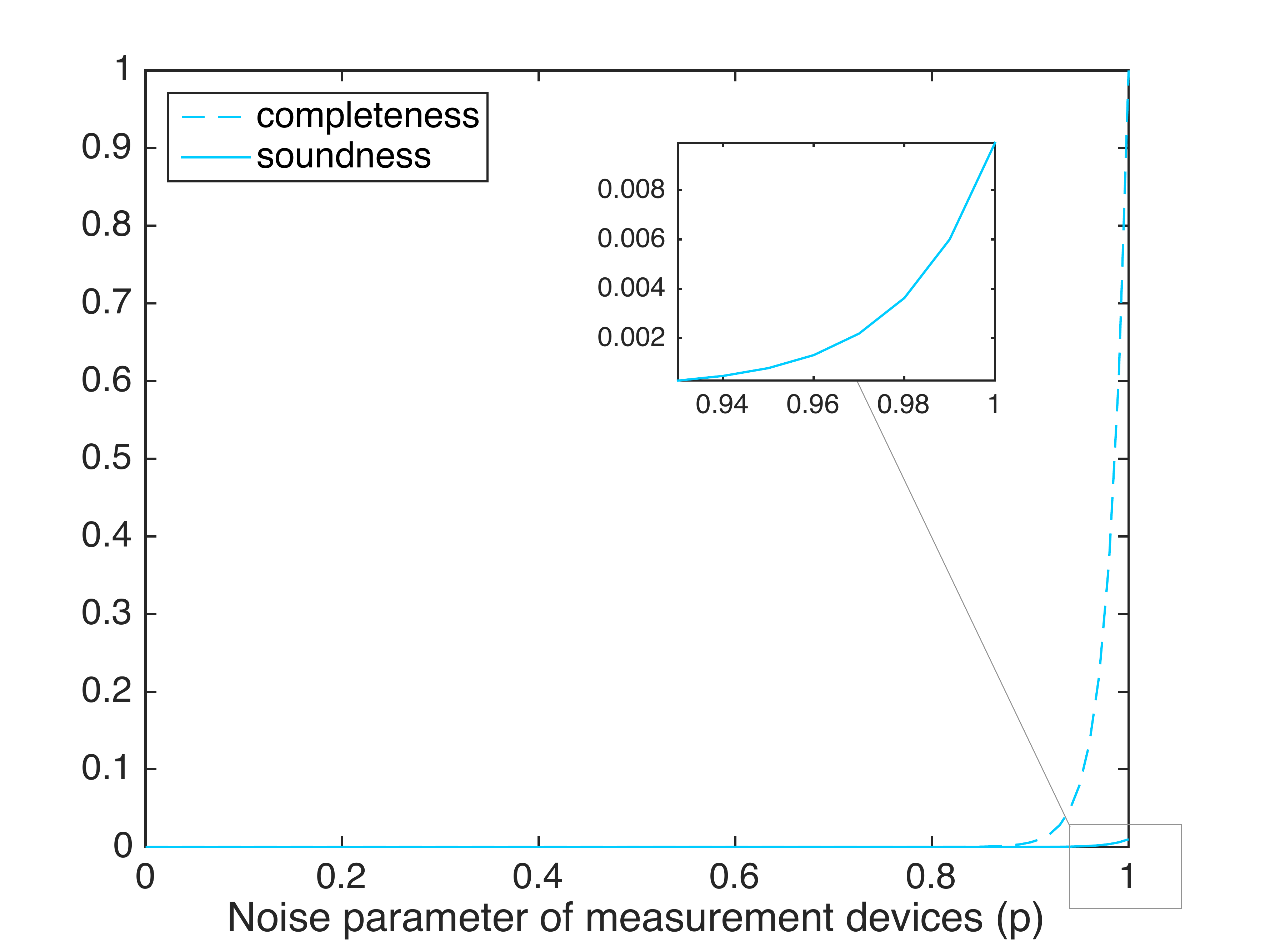}
\caption{Variation in completeness (dashed line) and soundness (solid line) with noise parameter of the measurements in Protocol \ref{alg:authteleptrust}. The inset shows the soundness bound for high values of $p$. $S=101$. \label{fig:noisysoundness}}
\end{figure}

\subsection{Noisy measurements with relaxed accept condition}
We now consider no longer requiring each and every copy that is tested to pass, in order to accept the quantum message teleported using copy $r$.
%We allow some tests to fail and still accept copy $r$. 
Let us denote the maximum number of tests that can fail as $\Delta \in \{0, 1, ... , S-1\}$; this means that now, at least $S-1-\Delta$ tests must pass in order to accept. A modified version of the protocol incorporating such a failure threshold is given in Protocol \ref{alg:authtelepthres}.

%We will now see what happens if we accept the state when up to $\Delta$ tests are allowed to fail. 
In our analysis, we must consider all possible combinations of tests that are allowed to fail. The POVM element for passing a test remains $M_{pass} = \big( \frac{ (3-p) \mathds{1} \otimes \mathds{1} + 4 p \Pi}{6} \big)$, while for failing a test it is $M_{fail} = \big( \frac{ (3+p) \mathds{1}\otimes \mathds{1} - 4 p \Pi}{6} \big)$.
We must then determine a new expression for $M_{ACC}$, which corresponds to accepting in Protocol \ref{alg:authtelepthres}.
Let us take $\mathcal{D}$ to be the set of tests that are allowed to fail, such that $|\mathcal{D}| \leq \Delta$, and consider all possible choices for the copies that belong to this set. This gives
\begin{align}
M_{ACC} & = \sum_{\substack{\mathcal{D}, \\ \abs{\mathcal{D}} \leq \Delta}} \underset{\substack{i \notin \mathcal{D}, \\ i \neq r}}{\otimes} M_{pass_{i}} \underset{\substack{i \in \mathcal{D}, \\ i \neq r}}{\otimes} M_{fail_{i}},
\label{eq:maccy}
\end{align}
which for our case of noisy measurements gives
\begin{align}
M_{ACC}
= \ & \sum_{\substack{\mathcal{D}, \\ \abs{\mathcal{D}} \leq \Delta}} \underset{\substack{i \notin \mathcal{D}, \\ i \neq r}}{\otimes} \Big( \frac{ (3-p) \mathds{1}\otimes \mathds{1} + 4 p \Pi}{6} \Big)_i \nonumber \\
& \underset{\substack{i \in \mathcal{D}, \\ i \neq r}}{\otimes} \Big( \frac{ (3+p) \mathds{1} \otimes \mathds{1} - 4 p \Pi}{6} \Big)_i.
\label{eq:noisymaccy}
\end{align}
 
 \begin{theorem}
In the case of noisy measurements with noise parameter $p$, Protocol  \ref{alg:authtelepthres} has completeness 
\begin{align}
\sum_{x=0}^{\Delta} {{S-1}\choose{x}} \Big( \frac{1+p}{2} \Big)^{S-1-x} \Big( \frac{1-p}{2} \Big)^x
\end{align}
and soundness
\begin{align}
\underset{k}{\max \ } \frac{k}{S} \ \sum_{x=0}^{\Delta} \ \sum_{y=0}^{\Delta - x} \ & {{S-k}\choose{x}} {{k-1}\choose{y}}  \Big( \frac{1+p}{2} \Big)^{S-k-x} \Big( \frac{1-p}{2} \Big)^x \nonumber \\
& \times \Big( \frac{3-p}{6} \Big)^{k-1-y} \Big( \frac{3+p}{6} \Big)^y.
\label{eq:soundcloud}
\end{align}
\label{th:noisymeasdelta}
\end{theorem}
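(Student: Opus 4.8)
The plan is to mirror the two-part structure of the proofs of Theorems~1 and~2, the only genuinely new feature being a combinatorial sum over which subset of tests is permitted to fail. Everything rests on the eigenrelations already recorded in Equation~(\ref{eq:noisyp}) for $M_{pass}$ acting on $\Pi$ and on $\Pi^\perp = \mathds{1}\otimes\mathds{1} - \Pi$, together with the companion relations for $M_{fail} = \mathds{1}\otimes\mathds{1} - M_{pass} = \frac{(3+p)\mathds{1}\otimes\mathds{1} - 4p\Pi}{6}$, namely $M_{fail}\,\Pi = \frac{1-p}{2}\,\Pi$ and $M_{fail}\,\Pi^\perp = \frac{3+p}{6}\,\Pi^\perp$, which I would state first since they are used throughout. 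I would also emphasise at the outset that in Equation~(\ref{eq:noisymaccy}) the sum over fail-sets $\mathcal{D}$ is a sum over mutually exclusive events --- each $\mathcal{D}$ records \emph{exactly} which tested copies fail --- so no accepting configuration is counted twice.

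For completeness, feed the ideal input $\underset{S}{\otimes}\Pi$ into $\text{Tr}[\Pi_r\otimes M_{ACC}\,\underset{S}{\otimes}\Pi]$ exactly as in the proof of Theorem~2. The factor on copy $r$ is $\text{Tr}[\Pi_r\Pi_r]=1$, so only $\text{Tr}[M_{ACC}\,\underset{i\neq r}{\otimes}\Pi_i]$ survives. Expanding $M_{ACC}$ as in Equation~(\ref{eq:noisymaccy}), a fail-set of size $x$ contributes $(\frac{1+p}{2})^{S-1-x}(\frac{1-p}{2})^{x}$ after applying the eigenrelations on the $S-1$ ideal tested copies; there are $\binom{S-1}{x}$ such sets, and summing $x$ from $0$ to $\Delta$ gives the claimed completeness.

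For soundness, reuse the reduction from the proof of Theorem~1 verbatim: $\text{Tr}(P_{fail}\rho_B)\le\text{Tr}(Q\,\rho_{1\ldots S})$ with $Q$ as in Equation~(\ref{eq:qandy}), now taking $M_{ACC}$ from Equation~(\ref{eq:noisymaccy}). Since $\Pi$ and $\Pi^\perp$ are simultaneous eigenprojectors of every $M_{pass}$ and $M_{fail}$ occurring in $Q$, the eigenprojectors of $Q$ are still the products of $\Pi$'s and $\Pi^\perp$'s of Equation~(\ref{eq:qeig}), and the task reduces to reading off the eigenvalue $g(k,S,p,\Delta)$ on a projector with $k$ copies carrying $\Pi^\perp$. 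The term labelled by $r$ vanishes unless $r$ lands on a $\Pi^\perp$ copy (which happens for $k$ of the $S$ values of $r$, contributing the prefactor $\frac{k}{S}$); in that case $M_{ACC}$ acts on the remaining $S-1$ copies, of which $k-1$ carry $\Pi^\perp$ and $S-k$ carry $\Pi$. Splitting a fail-set $\mathcal{D}$ into the $x$ of the $S-k$ pass-type ($\Pi$) copies and the $y$ of the $k-1$ orthogonal-type ($\Pi^\perp$) copies it contains --- so $x+y=|\mathcal{D}|\le\Delta$ --- the eigenrelations give the contribution $\binom{S-k}{x}\binom{k-1}{y}(\frac{1+p}{2})^{S-k-x}(\frac{1-p}{2})^{x}(\frac{3-p}{6})^{k-1-y}(\frac{3+p}{6})^{y}$; summing over $0\le x\le\Delta$ and $0\le y\le\Delta-x$ and multiplying by $\frac{k}{S}$ yields $g(k,S,p,\Delta)$. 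Since $\rho_{1\ldots S}$ is a density operator, $\text{Tr}(Q\,\rho_{1\ldots S})\le\max_k g(k,S,p,\Delta)$, which is Equation~(\ref{eq:soundcloud}). The one step I would go slowly on is this last combinatorial split: checking that the double sum enumerates every accepting configuration exactly once and that the vanishing of $\binom{S-k}{x}$ and $\binom{k-1}{y}$ outside their valid ranges automatically truncates the sums correctly.
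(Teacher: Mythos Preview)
Your proposal is correct and follows essentially the same route as the paper: state the eigenrelations for $M_{pass}$ and $M_{fail}$ on $\Pi,\Pi^\perp$ (the paper's Equation~(\ref{eq:qtpi})), compute completeness by expanding $M_{ACC}$ over fail-sets on the ideal input, and bound soundness via the maximal eigenvalue of $Q$ using the same eigenprojectors (\ref{eq:qeig}) and the same combinatorial split of $\mathcal{D}$ into $x$ failures on $\Pi$-copies and $y$ on $\Pi^\perp$-copies. Your added remarks on mutual exclusivity of the $\mathcal{D}$-sum and automatic truncation by vanishing binomials are clarifications the paper leaves implicit, but there is no methodological difference.
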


\begin{algorithm}[t]
\caption{\textsc{Authenticated teleportation with failure threshold}}
\begin{flushleft}
\textit{Input}: Security parameter $S$. \\
\textit{Goal}: Alice teleports state $\ket{\psi}$ to Bob through an authenticated channel.
\end{flushleft}
\begin{algorithmic}[1]
\STATE An untrusted source generates $S$ copies of the Bell state, and sends the shares of each to Alice and Bob.  \\ \
\STATE Alice chooses a random $r \in \{1, 2, ..., S\}$ and a failure threshold $\Delta \in \{ 0, 1, ..., S-1 \}$, and sends $r, \Delta$ to Bob.
 \\ \
\STATE For all copies $i \neq r$, Alice randomly chooses to measure either $\sigma_X, \sigma_Y, \sigma_Z$ on her part of the state. She tells Bob which operator she measured, and her measurement outcome. \\ \
\STATE For all copies $i \neq r$, Bob measures the same operator as Alice. For each copy, if the product of their measurement outcomes is $+1$ (or $-1$ when measuring $\sigma_Y$), they pass the test, otherwise they fail. \\ \
\STATE If \textit{at least} $S-1-\Delta$ tests on copies $i \neq r$ were passed, the parties ACCEPT. Otherwise, they REJECT. \\ \
\STATE The parties use copy $r$ as the entangled state to teleport $\ket{\psi}$ to Bob. 
\end{algorithmic}
\label{alg:authtelepthres}
\end{algorithm}

\begin{proof}
We will need the following:
\begin{align}
(\mathds{1} \otimes \mathds{1} - \Pi) \Pi  = 0,  \text{ \ } ( & \mathds{1} \otimes \mathds{1} - \Pi) \Pi^\perp   = \Pi^\perp, \nonumber \\
\Big( \frac{ (3-p) \mathds{1} \otimes \mathds{1} + 4 p \Pi}{6} \Big)\Pi & = \Big( \frac{1+p}{2} \Big) \Pi, \nonumber \\
\Big( \frac{ (3-p) \mathds{1} \otimes \mathds{1} + 4 p \Pi}{6} \Big) \Pi^\perp & = \Big( \frac{3-p}{6} \Big) \Pi^\perp, \nonumber \\
\Big( \frac{ (3+p) \mathds{1} \otimes \mathds{1} - 4 p \Pi}{6} \Big)\Pi & = \Big( \frac{1-p}{2} \Big) \Pi, \nonumber \\
\Big( \frac{ (3+p) \mathds{1} \otimes \mathds{1} - 4 p \Pi}{6} \Big) \Pi^\perp & = \Big( \frac{3+p}{6} \Big) \Pi^\perp. 
\label{eq:qtpi}
\end{align}
The completeness bound is given by
\begin{align}
%\text{Tr} (P_{pass} \rho_{id}) = \ &  
\text{Tr} \Big[  \Pi_r & \otimes M_{ACC}  \underset{S}{\otimes} \Pi \Big] \nonumber \\
%& = \frac{1}{S} \text{Tr} \Bigg[ \sum_{r=1}^S \Pi_r  
%\sum_{\substack{\mathcal{D}, \\ \abs{\mathcal{D}} \leq \Delta}} \
%\underset{\substack{i \notin \mathcal{D}, \\ i \neq r}}{\otimes} \Big( \frac{ (3-p) \mathds{1} \otimes \mathds{1} + 4 p \Pi}{6} \Big)_i \underset{\substack{i \in \mathcal{D}, \\ i \neq r}}{\otimes}\Big( \frac{ (3+p) \mathds{1} \otimes \mathds{1} - 4 p \Pi}{6} \Big)_i  \underset{S}{\otimes} \Pi \Bigg] \nonumber \\
= \ & \text{Tr} \Bigg[ \sum_{\substack{\mathcal{D}, \\ \abs{\mathcal{D}} \leq \Delta}} \underset{\substack{i \notin \mathcal{D}, \\ i \neq r}}{\otimes} \Big(  \frac{ (3-p) \mathds{1} \otimes \mathds{1} + 4 p \Pi}{6} \Big)_i \Pi_i \nonumber \\
& \underset{\substack{i \in \mathcal{D}, \\ i \neq r}}{\otimes} \Big(  \frac{ (3+p) \mathds{1} \otimes \mathds{1} - 4 p \Pi}{6} \Big)_i \Pi_i \Bigg] \nonumber \\
= \ & \sum_{x=0}^{\Delta} {{S-1}\choose{x}} \Big( \frac{1+p}{2} \Big)^{S-1-x} \Big( \frac{1-p}{2} \Big)^x.
\end{align}
To calculate the soundness bound, our expression for $Q$ from Equation (\ref{eq:qandy}) is given by 
\begin{align}
Q = \frac{1}{S}  \sum_{r=1}^S (\mathds{1} \otimes \mathds{1} - & \Pi)_r \
 \sum_{\substack{\mathcal{D}, \\ \abs{\mathcal{D}} \leq \Delta}} \
\underset{\substack{i \notin \mathcal{D}, \\ i \neq r}}{\otimes} \Big( \frac{ (3-p) \mathds{1} \otimes \mathds{1} + 4 p \Pi}{6} \Big)_i \nonumber \\
& \underset{\substack{i \in \mathcal{D}, \\ i \neq r}}{\otimes} \Big(  \frac{ (3+p) \mathds{1} \otimes \mathds{1} - 4 p \Pi}{6} \Big)_i  .
\end{align}
The eigenprojectors of $Q$ are again given by Equation (\ref{eq:qeig}).
%$\{ \underset{l \neq m}{\underset{k,}{\otimes}} \Pi_l^\perp \underset{m \neq l}{\underset{S-k,}{\otimes}} \Pi_m \}$. 
From the action of $Q$ on an eigenprojector with $(S-k)$ $\Pi$ terms and $k$ $\Pi^\perp$ terms, using Equation (\ref{eq:qtpi}), we determine the corresponding eigenvalue expression to be
\begin{align}
g(k, S, p, \Delta) = \ & \frac{k}{S} \ \sum_{x = 0}^{\Delta} \ \sum_{y=0}^{\Delta - x} \ {{S-k}\choose{x}} {{k-1}\choose{y}} \nonumber \\
& \times \Big( \frac{1+p}{2} \Big)^{S-k-x} \Big( \frac{1-p}{2} \Big)^x \nonumber \\
& \times \Big( \frac{3-p}{6} \Big)^{k-1-y} \Big( \frac{3+p}{6} \Big)^y,
\end{align}
which leads to the soundness being the maximum of this expression over all possible values of $k$.
(Note that if the parties run Protocol \ref{alg:authtelepthres} with perfect measurement devices, the corresponding security bounds can be determined by substituting $p=1$ in the above expressions.)

\end{proof}

The maximum eigenvalue of $Q$ for a certain $\Delta$ can be determined numerically for various values of the noise parameter $p$. We plot this, along with the completeness bound, for different failure thresholds in Figure \ref{fig:noisymeas}. 
We see that allowing a large proportion of tests to fail of course gives a better completeness bound, but it comes at a cost of the protocol being more susceptible to cheating by a dishonest source. 
Our results demonstrate the robustness of our protocol, by providing a quantitative assessment of the tradeoff between how well the protocol works, and how likely it is to fail. This is particularly useful for an experimental implementation: for example, if the noise parameter of the measurement devices is known, we can use Theorem \ref{th:noisymeasdelta} to determine the appropriate failure threshold required to achieve our desired tradeoff.

\begin{figure}[t]
\centering
\includegraphics[trim = 10mm 7.5cm 0mm 8cm, width=0.5\textwidth]{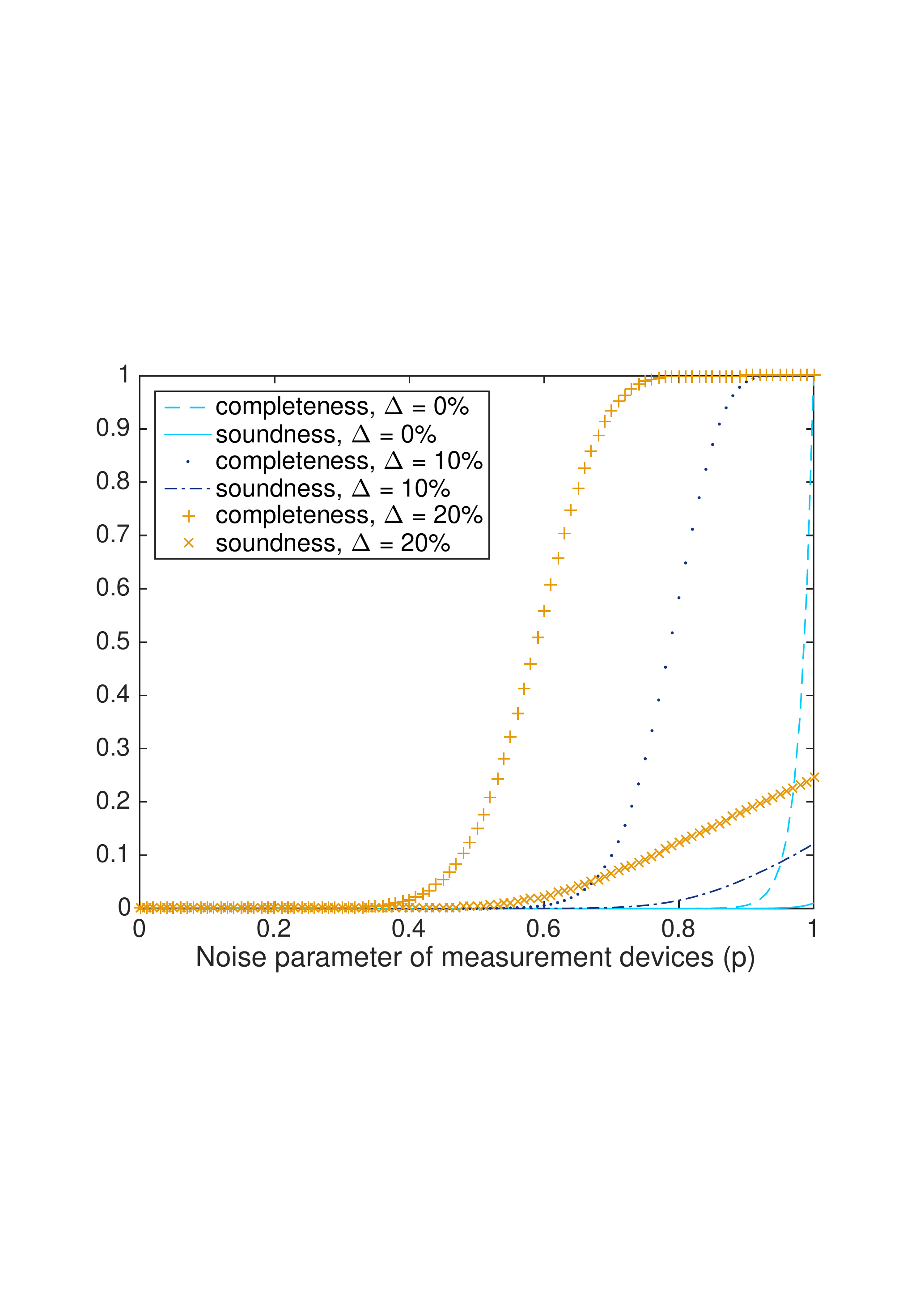}
\caption{Comparison of completeness and soundness bounds for different failure thresholds $\Delta$ in Protocol \ref{alg:authtelepthres}. $S=101$. 
\label{fig:noisymeas}}
\end{figure}

\section{Analysis for noisy states}

Let us now consider the scenario where, due to imperfections of realistic networks, the states prepared by the source are noisy versions of the Bell state. In such a setting, we are interested in analysing how noise affects the working of the protocol: how likely it is that the final state will be accepted, and the statements we can make about the fidelity of teleportation in this case.
We model the scenario using a generalised form of Werner states given by 
\begin{align}
\rho = v \Big[ (1-\eta) \ket{\Phi^+} \bra{\Phi^+} + \eta \ket{\Phi^-} \bra{\Phi^-} \Big] + (1-v) \frac{\mathds{1} \otimes \mathds{1}}{4},
\label{eq:noisay}
\end{align}
where $v$ is a parameter related to the amount of unpolarised noise in the experimental setup, and $\eta$ is related to the dephasing noise. These states, used for example in experiments such as \cite{Orieux2017}, take into account realistic noise that may be present when performing our protocols, and model polarisation-entangled photon pairs that can easily be created in the lab.
%which is the state used in the experiment \cite{Orieux2017}, with $\phi = \pi$ as they have. 
Note that when $\eta = 0$, $\rho$ takes the form of the Werner state \cite{Werner1989} with visibility $v$.

Let us first assume the measurements work perfectly for simplicity. As we saw earlier, in this case,  $M_{pass}  = \frac{ \mathds{1}\otimes \mathds{1}  + 2\Pi}{3}$. 
Then, the probability of passing a test with such a noisy state is 
\begin{align}
\text{Tr} ( M_{pass} \rho) 
%= \text{Tr} \Big[ \Big( \frac{\mathds{1} \otimes \mathds{1} + 2\Pi}{3} \Big) \rho \Big] 
=  \frac{3 + 3v - 4v\eta}{6}.
\end{align}
Let us first see how Protocol \ref{alg:authteleptrust} works in this scenario, where recall that $M_{ACC} = \underset{i \neq r}{\otimes} M_{pass_{i}}$.
%\begin{theorem}
%In the case of the source supplying states with noise parameter $v$ and dephasing parameter $\eta$,  the probability of accepting in Protocol \ref{alg:authteleptrust} is $ ( \frac{3 + 3v - 4v\eta}{6} )^{S-1}$.
%\end{theorem}
%\begin{proof}
%First, we determine the probability of passing a test with the noisy state:
%\begin{align}
%\text{Tr} ( M_{pass} \rho) 
%%= \text{Tr} \Big[ \Big( \frac{\mathds{1} \otimes \mathds{1} + 2\Pi}{3} \Big) \rho \Big] 
%=  \frac{3 + 3v - 4v\eta}{6}.
%\end{align}
%We take $P_{pass} = \underset{i \neq r}{\otimes} M_{ACC_{i}} \otimes \mathds{1}_r$,
The probability of accepting (passing all $S-1$ tests) when the source produces noisy states is 
%when the source produces $S$ copies of the noisy state:
\begin{align}
\text{Tr} (M_{ACC} \underset{S}{\otimes} \rho) 
%& = \text{Tr} \Big[ \frac{1}{S} \sum_{r=1}^S \underset{i \neq r}{\otimes} M_{ACC_{i}} \otimes \mathds{1}_r \underset{S}{\otimes} \rho \Big] \nonumber \\
%& = \text{Tr} \Big[ \underset{i \neq r}{\otimes} \Big( \frac{\mathds{1} \otimes \mathds{1} + 2 \Pi}{3} \Big)_i \rho_i ] \nonumber \\
 %= \text{Tr} \Big[ \frac{1}{S} \sum_{r=1}^S \rho_r \underset{i \neq r}{\otimes} \Big( \frac{\mathds{1} \otimes \mathds{1} + 2 \Pi}{3} \Big)_i \rho_i \Big] 
& = \text{Tr} \Big[  \rho_r \underset{i \neq r}{\otimes} \Big( \frac{\mathds{1} \otimes \mathds{1} + 2 \Pi}{3} \Big)_i \rho_i \Big] \nonumber \\
%& =  \prod_{i \neq r} \Big( \frac{3 + 3v - 4v\eta}{6} \Big)  \nonumber \\
& = \Big( \frac{3 + 3v - 4v\eta}{6} \Big)^{S-1}.
\end{align}
%\end{proof}

%\subsection{Noisy states with relaxed accept condition}
We now examine the probability of acceptance with noisy states in Protocol \ref{alg:authtelepthres}. In the case of perfect measurements, we have $M_{fail}  = \frac{ 2\mathds{1}\otimes \mathds{1}  - 2\Pi}{3}$. The modified $M_{ACC}$ is then
\begin{align}
M_{ACC} & = \sum_{\substack{\mathcal{D}, \\ \abs{\mathcal{D}} \leq \Delta}} \underset{\substack{i \notin \mathcal{D}, \\ i \neq r}}{\otimes} M_{pass_{i}} \underset{\substack{i \in \mathcal{D}, \\ i \neq r}}{\otimes} M_{fail_{i}} \nonumber \\
& = \sum_{\substack{\mathcal{D}, \\ \abs{\mathcal{D}} \leq \Delta}} \underset{\substack{i \notin \mathcal{D}, \\ i \neq r}}{\otimes} \Big( \frac{ \mathds{1} \otimes \mathds{1} + 2  \Pi}{3} \Big)_i \underset{\substack{i \in \mathcal{D}, \\ i \neq r}}{\otimes} \Big( \frac{ 2 \mathds{1} \otimes \mathds{1} - 2  \Pi}{3} \Big)_i.
\end{align}
We will also require the probability of failing a test with the noisy state:
\begin{align}
\text{Tr} (M_{fail} \rho) 
%= \text{Tr} \Big[ \Big( \frac{2\mathds{1} \otimes \mathds{1} - 2\Pi}{3} \Big) \rho \Big] 
=  \frac{3 - 3v + 4v\eta}{6}.
\end{align}
%\begin{theorem}
%In the case of the source supplying states with noise parameter $v$ and dephasing parameter $\eta$, the probability of accepting in Protocol \ref{alg:authtelepthres} is 
%\begin{align}
%\sum_{x=0}^{\Delta} \Mycomb[S-1]{x} \Big( \frac{3 + 3v - 4v\eta}{6} \Big)^{S-1-x} \Big(\frac{3 - 3v + 4v\eta}{6} \Big)^x.
%\end{align}
%\end{theorem}
%
%\begin{proof}
Then, the acceptance probability in Protocol \ref{alg:authtelepthres} is given by 
\begin{align}
\text{Tr} (& M_{ACC} \underset{S}{\otimes} \rho) \nonumber \\
%& = \frac{1}{S} \text{Tr} \Bigg[ \sum_{r=1}^S  \ \mathds{1}_r
%\sum_{\substack{\mathcal{D}, \\ \abs{\mathcal{D}} \leq \Delta}} \
%\underset{\substack{i \notin \mathcal{D}, \\ i \neq r}}{\otimes} \Big( \frac{  \mathds{1} \otimes \mathds{1} + 2  \Pi}{3} \Big)_i \underset{\substack{i \in \mathcal{D}, \\ i \neq r}}{\otimes}\Big( \frac{ 2 \mathds{1} \otimes \mathds{1} - 2 \Pi}{3} \Big)_i   \underset{S}{\otimes} \rho \Bigg] \nonumber \\
%& = \text{Tr} \Bigg[ \frac{1}{S} \sum_{r=1}^S \rho_r \sum_{\substack{\mathcal{D}, \\ \abs{\mathcal{D}} \leq \Delta}} \underset{\substack{i \notin \mathcal{D}, \\ i \neq r}}{\otimes} \Big(  \frac{  \mathds{1} \otimes \mathds{1} + 2 \Pi}{3} \Big)_i \rho_i \underset{\substack{i \in \mathcal{D}, \\ i \neq r}}{\otimes} \Big(  \frac{ 2 \mathds{1} \otimes \mathds{1} - 2 \Pi}{3} \Big)_i \rho_i \Bigg] \nonumber \\
= \ & \text{Tr} \Bigg[ \rho_r \sum_{\substack{\mathcal{D}, \\ \abs{\mathcal{D}} \leq \Delta}} \underset{\substack{i \notin \mathcal{D}, \\ i \neq r}}{\otimes} \Big(  \frac{  \mathds{1} \otimes \mathds{1} + 2 \Pi}{3} \Big)_i \rho_i \nonumber \\
& \underset{\substack{i \in \mathcal{D}, \\ i \neq r}}{\otimes} \Big(  \frac{ 2 \mathds{1} \otimes \mathds{1} - 2 \Pi}{3} \Big)_i \rho_i \Bigg] \nonumber \\
= \ & \sum_{x=0}^{\Delta} {{S-1}\choose{x}} \Big( \frac{3 + 3v - 4v\eta}{6} \Big)^{S-1-x} \Big(\frac{3 - 3v + 4v\eta}{6} \Big)^x.
\end{align}
%\end{proof}

Note that for Werner states ($\eta = 0$), the above expression reduces to the completeness bound with noisy measurement devices. A plot of the probability of acceptance
%in a realistic experimental implementation 
%in an experimental implementation with Werner states
for Werner states, motivated by experiments like \cite{Orieux2017}, 
is given in Figure \ref{fig:noisystatepic}, as a function of the failure threshold and visibility. If the source creates states of higher visibility, it is of course more likely to lead to the teleported qubit being accepted; further, we can see how increasing the failure threshold helps this.
\begin{figure}[t]
\centering
\includegraphics[trim = 10mm 7.5cm 0mm 8cm, width=0.5\textwidth]{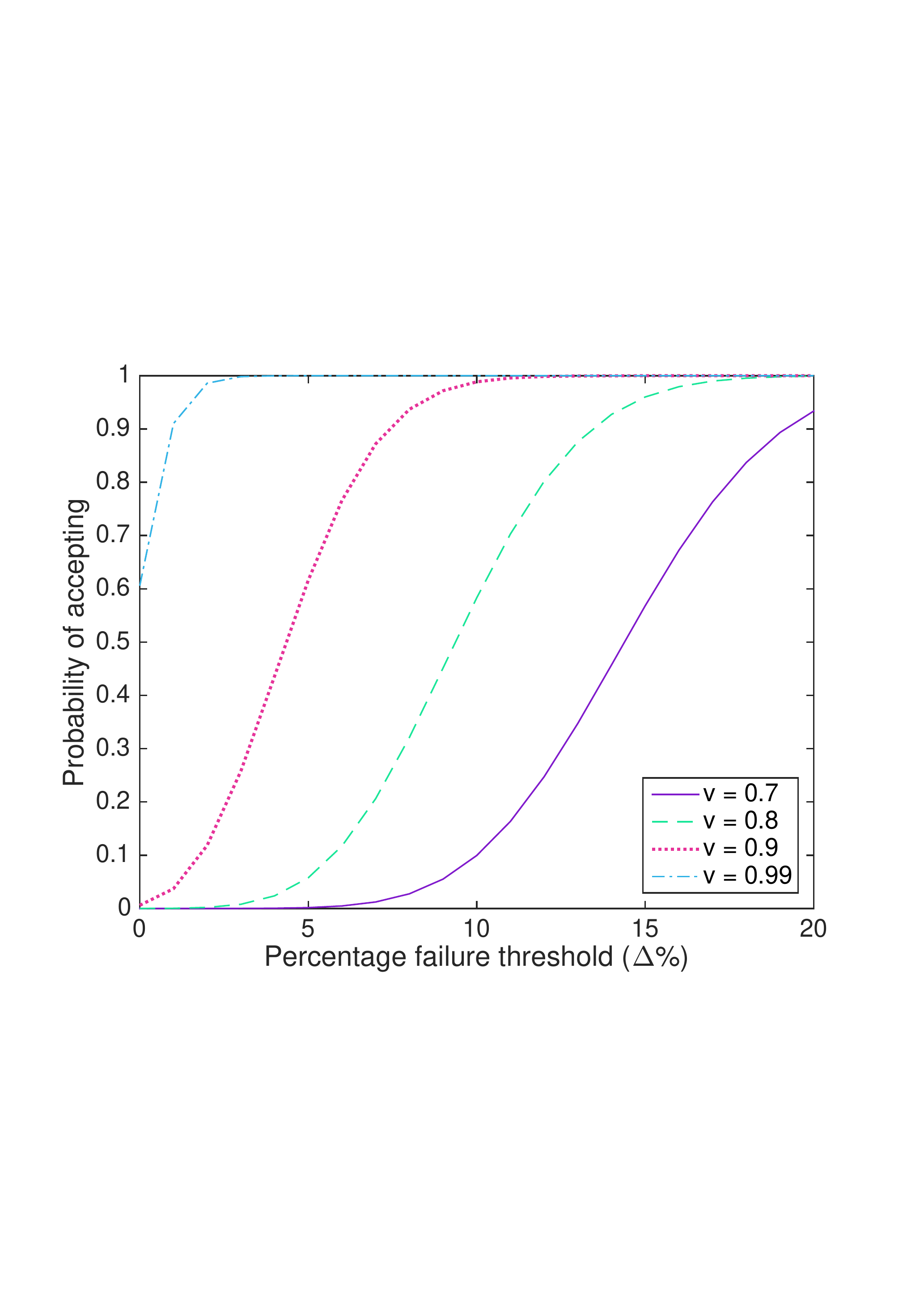}
\caption{Comparison of the variation of acceptance probability in Protocol \ref{alg:authtelepthres} with failure threshold $\Delta$, for Werner states of different visibilities $v$. $S = 101$.}
\label{fig:noisystatepic}
\end{figure}

Knowing the probability of acceptance when the source provides Werner states, we can determine a lower bound on the fidelity of teleportation in Step 6 of Protocol \ref{alg:authtelepthres}. For this, we require the soundness bound in the case of perfect measurements (Equation (\ref{eq:soundcloud}) with $p=1$), as well as Equation (\ref{eq:fidexpr}) for relating the soundness and probability of acceptance to the fidelity of teleportation. This gives

\small
\begin{align}
f \geq 1 - \frac{ \underset{k}{\max \ } \frac{k}{S} \ \overset{\Delta}{\underset{x=0}{\sum}} \ \overset{\Delta - x}{\underset{y=0}{\sum}} \ {{S-k}\choose{x}} {{k-1}\choose{y}} (1)^{S-k-x} (0)^x \big( \frac{1}{3} \big)^{k-1-y} \big( \frac{2}{3} \big)^y }{\overset{\Delta}{\underset{x=0}{\sum}} {{S-1}\choose{x}} \big( \frac{1+v}{2} \big)^{S-1-x} \big( \frac{1-v}{2} \big)^x}.
\end{align}
\normalsize

We plot this in Figure \ref{fig:grapho} for different failure thresholds and Werner state visibilities. For a source creating Werner states of visibility $v$ and allowing up to $\Delta$ failures, our results tell us the fidelity we can certify of an accepted transmission of a quantum message using Protocol \ref{alg:authtelepthres}. Further, by increasing the number of copies, the probability of acceptance increases while the soundness bound decreases, leading to a higher certified fidelity, and approaching the actual fidelity of teleportation with the Werner state.
\begin{figure}[t]
\centering
\includegraphics[trim = 10mm 7.5cm 0mm 8cm, width=0.5\textwidth]{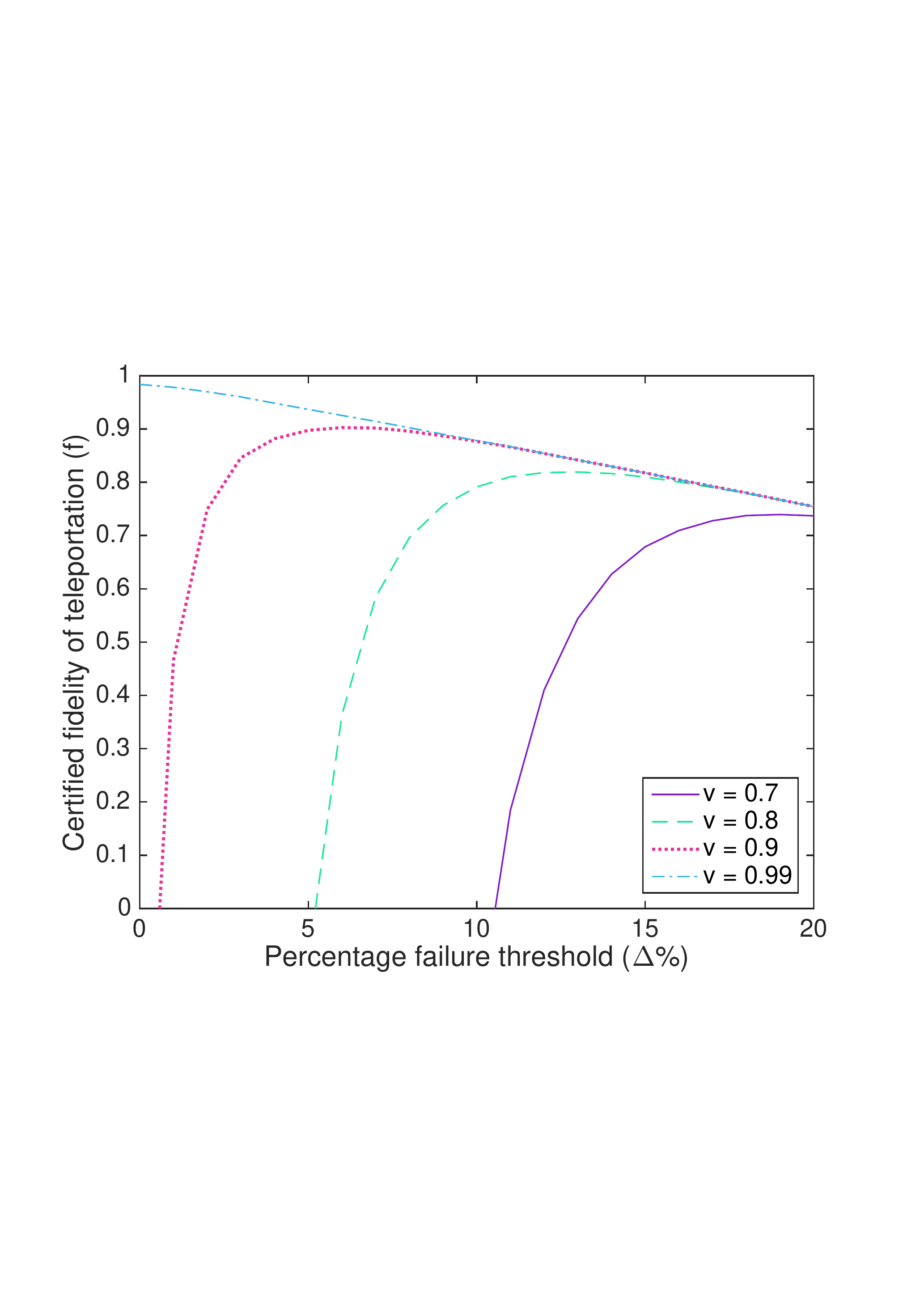}
\caption{Comparison of the variation of fidelity of teleportation certified using Protocol \ref{alg:authtelepthres} with failure threshold $\Delta$, for Werner states of different visibilities $v$. $S = 101$.}
\label{fig:grapho}
\end{figure}

It is then straightforward to extend this analysis to an experimental implementation with both noisy states and noisy measurements. Using Equation (\ref{eq:noisymaccy}) for the $M_{ACC}$ corresponding to noisy measurements and the generalised Werner state in Equation (\ref{eq:noisay}), we obtain
\begin{align}
\text{Tr} & (M_{ACC} \underset{S}{\otimes} \rho) \nonumber \\
& = \sum_{x=0}^{\Delta} {{S-1}\choose{x}} \Big( \frac{3 + 3pv - 4pv\eta}{6} \Big)^{S-1-x} \Big( \frac{3 - 3pv + 4pv\eta}{6} \Big)^x.
\end{align}
Setting $\eta = 0$ and using Equation (\ref{eq:soundcloud}), we get the following bound on the fidelity of teleportation with noisy measurements using Werner states:

\onecolumngrid
\begin{align}
f \geq 1 - \frac{\underset{k}{\max \ } \frac{k}{S} \ \overset{\Delta}{\underset{x=0}{\sum}} \ \overset{\Delta - x}{\underset{y=0}{\sum}} {{S-k}\choose{x}} {{k-1}\choose{y}}  \big( \frac{1+p}{2} \big)^{S-k-x} \big( \frac{1-p}{2} \big)^x \big( \frac{3-p}{6} \big)^{k-1-y} \big( \frac{3+p}{6} \big)^y}
{\overset{\Delta}{\underset{x=0}{\sum}} {{S-1}\choose{x}} \big( \frac{1+pv}{2} \big)^{S-1-x} \big( \frac{1-pv}{2} \big)^x}.
\end{align}
\\
\twocolumngrid
 
%The fidelity of teleportation using noisy states is then lower bounded by this expression.

\section{Extension to graph state verification}

As demonstrated by Markham and Krause in \cite{Markham2018}, the technique of stabiliser-based verification can be naturally extended to graph states. We will now study the effect of noise on such a scenario. The verification of graph states is an important problem for quantum information, and has been studied in many works \cite{Hayashi2015, Takeuchi2018, Markham2018, Takeuchi2019, zhu2019efficient, zhu2019efficientb, zhu2019general, Unnikrishnan2020}. However, there are differences in the figures of merit used, which can make comparison difficult, and few results on analysing the effect of noise.

Recently, several works of Zhu and Hayashi have set up a general framework for verification of states, applied also to graph states (and hyper graph states) \cite{zhu2019efficient, zhu2019efficientb, zhu2019general}.
In these works, they employ different figures of merit than we do here; theirs arrives from the perspective of hypothesis testing, whereas our soundness statement is motivated by authentication.
Although these are related, and indeed one can bound the other (in both directions), an optimal bound in one is not optimal in another, and nor is the treatment of noise. 
Our bounds for the authentication figure of merit are tight and therefore optimal. 
The noiseless version of our protocol is also optimal in terms of the figures of merit in \cite{zhu2019efficient, zhu2019efficientb, zhu2019general}, as it is very similar to the protocol in \cite{Pallister18} which is proved to have optimal scaling in \cite{zhu2019general}, and indeed the same proof follows for \cite{Markham2018}.

In the following analysis, we will stick to studying the authentication-motivated figures of merit. This will have implications for the effect of noise on the figures of merit in \cite{zhu2019efficient, zhu2019efficientb, zhu2019general}, even if it does not necessarily bound them optimally. It would be interesting future work to investigate more thoroughly the effect of noise on other figures of merit.

In Protocol \ref{alg:graphvertrust}, we give a modified version of the Markham-Krause protocol \cite{Markham2018} that incorporates a failure threshold. Let us now see how a similar noise analysis affects the security bounds in this case. 

Consider an $n$-qubit graph state $\ket{G}$ shared between $n$ parties. Its full stabiliser group is given by $\mathcal{S} = \{ \mathcal{S}_j \}$, where $j = \{1, ..., 2^n \}$. 
Using the stabilisers for the graph state $\ket{G}$, we can write the projector onto the graph state as a combination of these,
\begin{equation}
 \Pi = \ket{G}\bra{G} =  \frac{1}{2^n} \sum_{j=1}^{2^n} \mathcal{S}_j .
\end{equation}
 The projector onto the $+1$ eigenstate of the stabiliser $\mathcal{S}_j$ (passing the test) is given by
$
\frac{\mathds{1} + \mathcal{S}_j}{2}
$. (Note that the identity matrix here is of size $2^n$.)
Let us model the noisy $\mathcal{S}_j$ measurement, using POVMs as before, as
\begin{align}
p \frac{\mathds{1} + \mathcal{S}_j}{2} + (1-p) \frac{\mathds{1}}{2}.
\end{align}
Then, the POVM element for passing a test is
%\begin{align}
%M_{ACC}  & =  \frac{1}{2^n} \Big[ \sum_{j=1}^{2^n} \frac{\mathds{1} + \mathcal{S}_j}{2} \Big] =  
%\frac{1}{2^n} \sum_{j=1}^{2^n} \frac{\mathds{1}}{2} + \frac{1}{2^n} \sum_{j=1}^{2^n} \frac{\mathcal{S}_j}{2} 
%= \frac{ \mathds{1} + \Pi }{2}  .
%\end{align}
\begin{align}
M_{pass} & = \frac{1}{2^n} \sum_{j=1}^{2^n} \Big[ p \Big( \frac{\mathds{1} + \mathcal{S}_j}{2} \Big) + (1-p) \frac{\mathds{1}}{2} \Big] \nonumber \\
& = \frac{1}{2^n} \sum_{j=1}^{2^n} p \frac{\mathds{1}}{2} + \frac{1}{2^n} \sum_{j=1}^{2^n} p \frac{\mathcal{S}_j}{2} + \frac{1}{2^n} \sum_{j=1}^{2^n} (1-p) \frac{\mathds{1}}{2} \nonumber \\
& = \frac{1}{2^n} p \frac{2^n \mathds{1}}{2} + \frac{p}{2} \frac{1}{2^n} \sum_{j=1}^{2^n} \mathcal{S}_j + \frac{1}{2^n} (1-p) \frac{2^n \mathds{1}}{2} \nonumber \\
%& = p \frac{\mathds{1}}{2} + \frac{p}{2} \Pi + (1-p) \frac{\mathds{1}}{2} \nonumber \\
& = \frac{ \mathds{1} + p\Pi}{2}, 
\end{align}
and the POVM element for failing a test is $
M_{fail} = \frac{ \mathds{1} - p \Pi}{2}$. 
We now write $M_{ACC}$, the overall POVM element for accepting in Protocol \ref{alg:graphvertrust}, as 
\begin{align}
M_{ACC} & =  \sum_{\substack{\mathcal{D}, \\ \abs{\mathcal{D}} \leq \Delta}} \ 
\underset{\substack{i \notin \mathcal{D}, \\ i \neq r}}{\otimes} M_{pass_{i}} \underset{\substack{i \in \mathcal{D}, \\ i \neq r}}{\otimes} M_{fail_{i}} \nonumber \\
& = \sum_{\substack{\mathcal{D}, \\ \abs{\mathcal{D}} \leq \Delta}} \ 
\underset{\substack{i \notin \mathcal{D}, \\ i \neq r}}{\otimes}\Big( \frac{ \mathds{1} + p \Pi}{2} \Big)_i \underset{\substack{i \in \mathcal{D}, \\ i \neq r}}{\otimes} \Big( \frac{\mathds{1} - p \Pi}{2} \Big)_i.
\label{eq:macc}
\end{align}
The analysis then proceeds similarly to before, with this new $M_{ACC}$. 
%Recall that, if there is an allowed failure threshold, $M_{ACC} = \mathds{1}$ for those tests that can fail. 
We give the most general result in the following Theorem, which can then be reduced to specific cases (eg. perfect measurements, no failure threshold) by substituting the relevant values of parameters. 

\begin{algorithm}[t]
\caption{\textsc{Verification of graph states with failure threshold} (adapted from \cite{Markham2018})}
\begin{flushleft}
\textit{Input}: Security parameter $S$. \\
\textit{Goal}: The parties verify that they share the $n$-qubit graph state $\ket{G}$.
\end{flushleft}
\begin{algorithmic}[1]
\STATE An untrusted source generates $S$ copies of the graph state, and sends the shares of each to the parties. \\ \
\STATE Party 1 chooses a random $r \in \{1, 2, ..., S\}$ and a failure threshold $\Delta \in \{ 0, 1, ..., S-1 \}$, and sends $r, \Delta$ to the other parties. \\ \
\STATE For all copies $i \neq r$, party 1 randomly chooses a stabiliser $\mathcal{S}_j$ to measure. She tells the other parties which stabiliser she chose, and her measurement outcome. \\ \
\STATE For all copies $i \neq r$, the other parties perform their corresponding stabiliser measurements. For each copy, if the product of all of the parties' measurement outcomes is $+1$, they pass the test, otherwise they fail. \\ \
\STATE If \textit{at least} $S-1-\Delta$ tests on copies $i \neq r$ were passed, the parties ACCEPT. Otherwise, they REJECT. \\ \
\STATE The parties use copy $r$ for their desired application.
\end{algorithmic}
\label{alg:graphvertrust}
\end{algorithm}

\begin{theorem}
In the case of noisy measurements with noise parameter $p$, Protocol \ref{alg:graphvertrust} has completeness 
\begin{align}
\sum_{x=0}^{\Delta} {{S-1}\choose{x}} \Big( \frac{1+p}{2} \Big)^{S-1-x} \Big( \frac{1-p}{2} \Big)^x
\end{align}
and soundness
\begin{align}
\underset{k}{\max \ } \frac{k}{2^{k-1} S} \ \sum_{x=0}^{\Delta} \ \sum_{y=0}^{\Delta - x} \ & {{S-k}\choose{x}} {{k-1}\choose{y}} \nonumber \\
& \times \Big( \frac{1+p}{2} \Big)^{S-k-x} \Big( \frac{1-p}{2} \Big)^x.
\end{align}
\end{theorem}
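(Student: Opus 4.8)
The plan is to follow the same template established in the proofs of Theorem~\ref{th:noisymeasdelta} and the Bell-state case, with the only genuine change being the replacement of the single-stabiliser projector $\Pi = \ketbra{\Phi^+}{\Phi^+}$ by the graph-state projector $\Pi = \ketbra{G}{G}$, and tracking how the new POVM elements $M_{pass} = (\mathds{1} + p\Pi)/2$ and $M_{fail} = (\mathds{1} - p\Pi)/2$ act on the eigenprojectors $\Pi$ and $\Pi^\perp = \mathds{1} - \Pi$. First I would record the elementary relations
\begin{align}
(\mathds{1} - \Pi)\Pi = 0, \quad (\mathds{1} - \Pi)\Pi^\perp = \Pi^\perp, \nonumber \\
\Big( \frac{\mathds{1} + p\Pi}{2} \Big)\Pi = \Big( \frac{1+p}{2} \Big)\Pi, \quad \Big( \frac{\mathds{1} + p\Pi}{2} \Big)\Pi^\perp = \frac{1}{2}\Pi^\perp, \nonumber \\
\Big( \frac{\mathds{1} - p\Pi}{2} \Big)\Pi = \Big( \frac{1-p}{2} \Big)\Pi, \quad \Big( \frac{\mathds{1} - p\Pi}{2} \Big)\Pi^\perp = \frac{1}{2}\Pi^\perp.
\end{align}
The key structural difference from the Bell case is visible here: $M_{pass}$ and $M_{fail}$ act on $\Pi^\perp$ with the \emph{same} eigenvalue $1/2$, independent of $p$ and of whether the test passed or failed. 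This is exactly why the $y$-summation over the failing-copy choices inside the $\Pi^\perp$ block collapses.

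For completeness, I would feed $\underset{S}{\otimes}\Pi$ into $\text{Tr}[\Pi_r \otimes M_{ACC}\, \underset{S}{\otimes}\Pi]$ using the expansion of $M_{ACC}$ in Equation~(\ref{eq:macc}); on each copy $i \neq r$ the operator $\Pi_i$ forces $M_{pass}$ to contribute $(1+p)/2$ and $M_{fail}$ to contribute $(1-p)/2$, and summing over all sets $\mathcal{D}$ with $|\mathcal{D}| \leq \Delta$ reproduces the binomial sum $\sum_{x=0}^{\Delta}\binom{S-1}{x}\big(\frac{1+p}{2}\big)^{S-1-x}\big(\frac{1-p}{2}\big)^x$, identical to Theorem~\ref{th:noisymeasdelta} — unsurprising, since in an honest run every copy is the ideal state and the graph structure is irrelevant to the pass/fail statistics. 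For soundness, I would substitute the new $M_{ACC}$ into $Q$ of Equation~(\ref{eq:qandy}). The eigenprojectors of $Q$ are again those of Equation~(\ref{eq:qeig}), indexed by the number $k$ of $\Pi^\perp$ factors, since $Q$ is built entirely from $\Pi$ and $\Pi^\perp$ on the tensor factors. The task is then to compute the eigenvalue $g(k,S,p,\Delta)$ by acting with $Q$ on a fixed eigenprojector with $(S-k)$ copies of $\Pi$ and $k$ copies of $\Pi^\perp$.

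The main point — and the one place where the graph-state version genuinely departs from the qubit computation — is the combinatorial bookkeeping in this eigenvalue. As in the earlier proofs, the factor $\Pi^\perp_r$ in $Q$ kills the term unless the distinguished index $r$ sits on a $\Pi^\perp$ factor, contributing the overall $k/S$. Among the remaining $S-1$ tested copies, $x$ of the $\Pi$-copies and $y$ of the $\Pi^\perp$-copies are assigned to the failure set $\mathcal{D}$, with $x + y \leq \Delta$; on $\Pi$-copies a passing test gives $\frac{1+p}{2}$ and a failing test gives $\frac{1-p}{2}$, while on \emph{every} $\Pi^\perp$-copy — pass or fail — the contribution is exactly $\frac{1}{2}$, so the $k-1$ tested $\Pi^\perp$-factors always contribute $(1/2)^{k-1} = 1/2^{k-1}$ regardless of $y$. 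Hence the inner sum over $y$ is just $\sum_{y=0}^{\Delta-x}\binom{k-1}{y}$ with no weighting, and the $\binom{S-k}{x}$, $\frac{1+p}{2}$, $\frac{1-p}{2}$ factors behave exactly as in Theorem~\ref{th:noisymeasdelta}. Collecting everything gives
\begin{align}
g(k,S,p,\Delta) = \frac{k}{2^{k-1}S} \sum_{x=0}^{\Delta}\sum_{y=0}^{\Delta-x} \binom{S-k}{x}\binom{k-1}{y} \Big(\frac{1+p}{2}\Big)^{S-k-x}\Big(\frac{1-p}{2}\Big)^x,
\end{align}
and the soundness bound follows from $\text{Tr}(P_{fail}\rho_B) \leq \text{Tr}(Q\rho_{1\dots S}) \leq \max_k g(k,S,p,\Delta)$, since $Q$ is a convex-weighted sum of commuting projectors and its operator norm is the largest eigenvalue. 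I would then remark that substituting $p=1$ recovers the perfect-measurement graph-state bound, and $\Delta = 0$ collapses the double sum to $\max_k \frac{k}{2^{k-1}S}\big(\frac{1+p}{2}\big)^{S-k}$, as expected. The only mild subtlety to check carefully is that the eigenprojectors listed in Equation~(\ref{eq:qeig}) are genuinely a complete orthogonal resolution for this $Q$ — which holds because $\Pi$ is a rank-one projector and $\Pi + \Pi^\perp = \mathds{1}$ on each factor — so no eigenvalue is missed when taking the maximum.
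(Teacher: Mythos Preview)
Your proposal is correct and follows essentially the same approach as the paper: record the action of $M_{pass}$, $M_{fail}$, and $\mathds{1}-\Pi$ on $\Pi$ and $\Pi^\perp$, compute completeness by tracing $M_{ACC}$ against $\underset{S}{\otimes}\Pi$, and obtain soundness by reading off the eigenvalue $g(k,S,p,\Delta)$ of $Q$ on the eigenprojectors of Equation~(\ref{eq:qeig}). Your explicit observation that both $M_{pass}$ and $M_{fail}$ act on $\Pi^\perp$ with eigenvalue $\tfrac{1}{2}$, so that the $(\tfrac{1}{2})^{k-1-y}(\tfrac{1}{2})^{y}$ factors combine into $1/2^{k-1}$ independently of $y$, is exactly the simplification the paper performs in its final displayed step; the paper just writes it out rather than commenting on it.
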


\begin{proof}
We now have
\begin{align}
(\mathds{1} - \Pi) \Pi  = 0, & \text{ \ } (\mathds{1} - \Pi) \Pi^\perp  = \Pi^\perp, \nonumber \\
\text{ \ } \Big( \frac{ \mathds{1}  + p \Pi}{2} \Big) \Pi  = \Big( \frac{1+p}{2} \Big) \Pi, & \text{ \ } 
\Big( \frac{ \mathds{1}  + p \Pi}{2} \Big) \Pi^\perp  = \frac{1}{2} \Pi^\perp, \nonumber \\
\text{ \ } \Big( \frac{ \mathds{1}  - p \Pi}{2} \Big) \Pi  = \Big( \frac{1-p}{2} \Big) \Pi, & \text{ \ } 
\Big( \frac{ \mathds{1}  - p \Pi}{2} \Big) \Pi^\perp  = \frac{1}{2} \Pi^\perp.
\end{align}
The completeness bound is given by
\begin{align}
%\text{Tr} & (P_{pass} \rho_{id}) \nonumber \\
%& = 
\text{Tr} \Bigg[ \Pi_r &
\sum_{\substack{\mathcal{D}, \\ \abs{\mathcal{D}} \leq \Delta}} \
\underset{\substack{i \notin \mathcal{D}, \\ i \neq r}}{\otimes} \Big( \frac{ \mathds{1} + p\Pi}{2} \Big)_i \underset{\substack{i \in \mathcal{D}, \\ i \neq r}}{\otimes} \Big( \frac{ \mathds{1} - p \Pi}{2} \Big)_i  \underset{S}{\otimes} \Pi \Bigg] \nonumber \\
%& = \frac{1}{S \Mycomb[S-1]{\Delta}} \text{Tr} \Big[ S \Mycomb[S-1]{\Delta} \Big( \frac{1+p}{2}\Big)^{S-1-\Delta} \underset{S}{\otimes} \Pi \Big] \nonumber \\ 
& = \sum_{x=0}^{\Delta} {{S-1}\choose{x}} \Big( \frac{1+p}{2} \Big)^{S-1-x} \Big( \frac{1-p}{2} \Big)^x.
\end{align}
To calculate the soundness bound, from Equation (\ref{eq:qandy}) our $Q$ is now given by
\begin{equation}
Q = \frac{1}{S}  \sum_{r=1}^S (\mathds{1}  - \Pi)_r  
\sum_{\substack{\mathcal{D}, \\ \abs{\mathcal{D}} \leq \Delta}} \
\underset{\substack{i \notin \mathcal{D}, \\ i \neq r}}{\otimes} \Big( \frac{\mathds{1} + p \Pi}{2} \Big)_i \underset{\substack{i \in \mathcal{D}, \\ i \neq r}}{\otimes} \Big( \frac{\mathds{1} - p \Pi}{2} \Big)_i  .
\end{equation}
In a similar way to the previous calculations, we determine a general expression for the eigenvalues of $Q$ as
\begin{align}
g(k, S, p, \Delta) = \ & \frac{k}{S} \ \sum_{x=0}^{\Delta} \ \sum_{y=0}^{\Delta - x} \ {{S-k}\choose{x}} {{k-1}\choose{y}} \nonumber \\
& \times \Big( \frac{1+p}{2} \Big)^{S-k-x} \Big( \frac{1-p}{2} \Big)^x \Big( \frac{1}{2} \Big)^{k-1-y} \Big( \frac{1}{2} \Big)^y \nonumber \\
= \ & \frac{k}{2^{k-1} S} \ \sum_{x=0}^{\Delta} \ \sum_{y=0}^{\Delta - x} \ {{S-k}\choose{x}} {{k-1}\choose{y}} \nonumber \\
& \times \Big( \frac{1+p}{2} \Big)^{S-k-x} \Big( \frac{1-p}{2} \Big)^x.
\end{align}
The soundness bound is then given by the maximum of this expression over all $k$.

\end{proof}

\section{Discussion}

In this work, we have analysed a simple approach to authenticated teleportation of a quantum message in a noisy network. We conclude by discussing the merits and drawbacks of our work, and future perspectives.

As in \cite{Marin, Barnum2002}, the interaction required in our protocol can be reduced by the parties sharing private random classical keys that encode the stabiliser they measure in each round, the copy $r$ that is used for the teleportation, and the allowed failure threshold $\Delta$. 
%Further, as mentioned previously, teleportation of the quantum state can be viewed as Alice sending a quantum message encrypted with the outcomes of her Bell state measurement.

Comparing with previous authentication schemes for quantum messages \cite{Barnum2002, Hayden2016, Portmann2017}, our protocol, which consists of performing stabiliser measurements on many separate copies of the state, allows for a much easier experimental implementation. The existing schemes, on the other hand, use stabiliser error-correcting codes, with the number of qubits of the encoded state increasing with the desired security level. However, the tradeoff comes in terms of security, where both types of schemes are $\epsilon$-secure with $\epsilon = \frac{1}{S}$ for our protocol and scaling as $2^{-S}$ for protocols based on \cite{Barnum2002}. 

We will briefly comment on the difference between entanglement purification, used in \cite{Barnum2002}, and our method of authenticating the quantum channel in teleportation. In entanglement purification, one starts out with many copies of a state and applies operations to transform it to a (fewer) number of copies of a maximally entangled state. In the cryptographic setting we consider, an untrusted source is creating copies that we do not assume anything about (notably, they may be different in each round). This facilitates the application of our scheme to other scenarios where an adversarial source creates untrusted entanglement.

To demonstrate our protocol experimentally, one simply needs to generate many copies of a state, which is perfectly feasible with existing experimental setups. The noise tolerance of our protocol has been illustrated in terms of the tradeoff between the security bounds. 
%Note that we have only considered noise in the authentication part of our scheme; we have assumed that the parties can perform the teleportation perfectly once an authenticated quantum channel has been established. 
In addition to considering noise in the authentication procedure, future work could also incorporate noise in the teleportation procedure itself.

Finally, we showed that our analysis for noisy measurement devices can be extended to verification of graph states shared between a network of parties. This has potential applications in the multitude of ways graph states are used across quantum information, be it for secret sharing \cite{Markham}, metrology \cite{Toth2014}, or blind quantum computation \cite{Gheorghiu2017}. 

\textit{Acknowledgements.} We thank Nathan Walk for helpful discussions. We acknowledge funding from the EPSRC, the ANR through the ANR-17-CE24-0035 VanQuTe project, and the European Union's Horizon 2020 Research and Innovation Programme under Grant Agreement No. 820445 (QIA).

%\bibliographystyle{ieeetr}
%\small
%\bibliography{reportbib}

\end{document}